\newcommand{\prob}{\mathcal{P}} 
\newcommand{\argmin}{\mathbf{argmin}} 
\newcommand{\OTop}{\mathscr{T}} 
\renewcommand{\phi}{\varphi}
\renewcommand{\epsilon}{\varepsilon}
\newcommand{\cX}{\mathcal{X}}
\newcommand{\cost}{\mathbf{c}}
\newcommand{\costmat}{\mathbf{D}}
\newcommand{\mass}{\mathbf{M}}
\newcommand{\summary}{\mathcal{S}}
\newcommand{\zz}{\mathbf{z}}
\newcommand{\uu}{\mathbf{u}}
\newcommand{\forward}{\mathcal{F}}
\newcommand{\C}{\mathbf{C}}
\newcommand{\E}[1]{\mathbb{E}\left[ #1  \right]}
\newcommand{\wc}{\xrightarrow{\text{w}}}
\newcommand{\vertiii}[1]{{\left\vert\kern-0.25ex\left\vert\kern-0.25ex\left\vert #1 
    \right\vert\kern-0.25ex\right\vert\kern-0.25ex\right\vert}}
\newcommand{\bra}[1]{\langle #1 \rangle}
\newcommand{\BK}[1]{ {\left( #1 \right)} }
\newcommand{\sqBK}[1]{ {\left[ #1 \right]} }
\newcommand{\curBK}[1]{ {\left\{ #1 \right\}} }
\newcommand{\ess}{\ensuremath{\mathrm{ESS}}}
\newcommand{\noise}{\ensuremath{\mathrm{noise}}}
\newcommand{\IS}{\ensuremath{\mathrm{IS}}}
\newcommand{\OT}{\ensuremath{\mathrm{OT}}}
\newcommand{\prior}{\textrm{prior}}
\newcommand{\post}{\textrm{post}}
\newcommand{\mesh}{\mathcal{M}}
\newcommand{\hess}{\mathcal{H}}
\newcommand{\bayesop}{\mathscr{B}}
\newcommand{\resampop}{\mathscr{R}}
\newcommand{\markovop}{\mathscr{M}}
\newtheorem{thm}{Theorem}
\newtheorem{theorem}[thm]{Theorem}
\newtheorem{prop}[thm]{Proposition}
\newtheorem{assumption}[thm]{Assumption}
\newtheorem{remark}[thm]{Remark}
\newtheorem{lemma}[thm]{Lemma}
\begin{document}

\title{Sequential Ensemble Transform for Bayesian Inverse Problems}

\author{Aaron Myers \footnotemark[1]  
  , Alexandre H. Thiery \footnotemark[4]
  , Kainan Wang  \footnotemark[2] 
  , Tan Bui-Thanh \footnotemark[1] \footnotemark[3]}

\renewcommand{\thefootnote}{\fnsymbol{footnote}}

\footnotetext[1]{Institute for Computational Engineering
    \& Sciences, The University of Texas at Austin, Austin, TX 78712,
    USA.}

\footnotetext[2]{Sanchez Oil \& Gas, Houston, TX}

\footnotetext[3]{Department of Aerospace Engineering \& Engineering Mechanics
	, The University of Texas at Austin, Austin, TX 78712, USA}

\footnotetext[4]{Department of Satistics and Applied Probability,
	National University of Singapore, Singapore }

\renewcommand{\thefootnote}{\arabic{footnote}}

% MY MACROs
\newcommand{\TODO}[1]{ \fbox{\parbox{3in}{\bf TODO: #1}}}

\newcommand{\grbf}[1] {\mbox{\boldmath${#1}$\unboldmath}}
\newcommand{\gbf}[1] {\mathbf{#1}}

\newcommand{\beq} {\begin{equation}}
\newcommand{\eeq} {\end{equation}}
\newcommand{\bdm} {\begin{displaymath}}
\newcommand{\edm} {\end{displaymath}}
\newcommand{\bit}{\begin{itemize}}
\newcommand{\eit}{\end{itemize}}
\newcommand{\bde}{\begin{description}}
\newcommand{\ede}{\end{description}}
\newcommand{\bce}{\begin{center}}
\newcommand{\ece}{\end{center}}
\newcommand{\ben} {\begin{enumerate}}
\newcommand{\een} {\end{enumerate}}
\newcommand{\bea} {\begin{eqnarray}}
\newcommand{\eea} {\end{eqnarray}}
\newcommand{\barr} {\begin{array}}
\newcommand{\earr} {\end{array}}
\newcommand{\bean} {\begin{eqnarray*}}
\newcommand{\eean} {\end{eqnarray*}}
\newcommand{\edoc} {

\maketitle

\begin{abstract}
We present the Sequential Ensemble Transform (SET) method, an approach for generating approximate samples from a Bayesian posterior distribution. The method explores the posterior distribution by solving a sequence of discrete optimal transport problems to produce a series of transport plans which map prior samples to posterior samples.
We prove that the sequence of Dirac mixture distributions produced by the SET method converges weakly to the true posterior as the sample size approaches infinity. Furthermore, our numerical results indicate that, when compared to  standard Sequential Monte Carlo (SMC) methods, the SET approach is more robust to the choice of Markov mutation kernels and requires less computational efforts to reach a similar accuracy when used to explore complex posterior distributions. Finally, we describe adaptive schemes that allow to completely automate the use of the SET method.
\end{abstract}

\section{Introduction}
\label{sec:intro}
Inverse problems  enable 
integration of observational and experimental data, simulations and/or
mathematical models to make scientific predictions. 
We focus on inverse problems in which the goal is to
determine a parameter of interest from indirect
and imprecise observations. 
The relationship between the parameter and the
noise-free observations, the forward map, is often provided through the solution of a complex mathematical
model, the forward problem.

The Bayesian approach 
formulates the inverse problem as a statistical inference
problem \cite{mosegaard1995monte,Stuart2010,Kaipio2006}. Given noisy
observational data, the governing forward
problem, and a prior probability distribution,
the solution of the Bayesian inverse
problem is the posterior probability distribution over the
parameters.  The prior distribution encodes knowledge or assumptions about the
parameter space before data are observed. 
The posterior distribution incorporates both the prior knowledge and the
observations. Non-linearity of the forward map leads to posterior distributions that 
are typically not Gaussian, even in situations when both the prior and observational noise 
probability distributions
are Gaussian.

Exploring a high dimensional non-Gaussian posterior
is computationally challenging. Indeed, evaluating the posterior density 
typically requires evaluating the forward map which, for problems governed by partial differential equations (PDEs), 
dominates the computational cost. Standard numerical quadrature methods routinely used for estimating 
statistical quantities of interest (e.g. statistical moments, probability of rare event) are  
infeasible in these high-dimensional settings.

The Markov chain Monte Carlo (MCMC) algorithm \cite{Hastings70,
  MetropolisRosenbluthRosenbluthEtAl53} 
is a popular approach for exploring the posterior distribution in Bayesian inverse problems. 
Estimates obtained from standard MCMC methods often require a large number of samples to be meaningful, especially in high dimensional settings. In Bayesian inverse problems, generating each MCMC sample requires 
an evaluation of the posterior density, which relies on evaluating the computationally expensive forward map.

Sequential Monte Carlo (SMC) methods are computational
techniques widely used in engineering, statistics, and many other
fields \cite{gordon1993novel,doucet2001introduction,DelMoral2004,doucet2009tutorial,DelDoucetJasra06} to
approximate a sequence of probability distributions, usually of
increasing complexity or dimension.  
A standard approach in Bayesian inverse problems consists of introducing a sequence of distributions that interpolates between a distribution that is easy to sample from (e.g. the prior distribution, or a Gaussian approximation of the posterior distribution) and the posterior distribution. Through a combination of importance sampling, 
Markovian mutations and resampling procedures, the SMC method iteratively 
constructs a sequence of particle approximations of this sequence of 
distributions. Under very mild assumptions, 
SMC methods are consistent in the limit when the number $N$ of particles 
goes to infinity and converge at Monte-Carlo rate $\mathcal{O}(N^{-1/2})$. 
Furthermore, methods are available for implementing this class of algorithms 
on parallel architectures \cite{whiteley2016role,verge2015parallel,lee2016forest,sen2019particle}.

In this article, inspired by recent developments in the data-assimilation 
literature \cite{Reich2013,Cheng2013}, we exploit algorithms based on the concept of optimal transport \cite{Monge1781,Villani2008,Villani2003,peyre2019computational}. 
Our approach, 
the Sequential Ensemble Transform (SET) method, combines the SMC 
framework with the use of optimal transport to efficiently build 
particle approximations of the posterior distribution in high-dimensional 
Bayesian inverse problems (see figure \ref{fig:smcex}). We refer the readers  
to \cite{MoselhyMarzouk12,heng2015gibbs,Parno2014,spantini2018inference} for other Monte-Carlo methods 
based on transportation concepts.
Unlike SMC methods,  the SET approach, similarly  
to the algorithm of \cite{Reich2013}, uses an optimal 
transport scheme instead of the usual resampling procedure. The main advantage 
of the proposed method is its robustness with respect to the choice of mutation kernel steps. 
Indeed, without mutation kernel, the SMC method is a  variant of the
standard importance sampling procedure, which is known to behave poorly in 
high-dimensional settings \cite{bengtsson2008curse}, or more generally when there is a large discrepancy 
between the proposal and target distributions. 
Consequently, good mutation kernels are often crucial 
to the successful implementation of SMC methods in Bayesian inverse problems \cite{Beskos2014}. 
Unfortunately, it is notoriously difficult to design Markov mutation kernels with good mixing properties 
in  high-dimensional settings that are common in Bayesian inverse problems \cite{Bui-ThanhGhattasMartinEtAl13, Beskos2015,Kantas2014}.
Adaptive SMC procedures \cite{chopin2002sequential,del2012adaptive,jasra2011inference,Beskos2015a} can help mitigate this issue by 
automatically tuning the mutation kernels and the interpolating sequence of distributions. 
Our numerical studies presented in Section \ref{sec:numerical} show that the SET 
approach performs favorably when compared to standard SMC methods. Furthermore, although approximate methods \cite{genevay2016stochastic,Cuturi13} are available for efficiently solving discrete optimal transport problems, we have found that in most realistic Bayesian inverse problems and for a typical number of particles $N \lesssim 10^4$, the computational cost of (exactly) solving the discrete optimal transport problems is negligible when compared to the computational burden associated with the forward-solves necessary to implement the SET/SMC algorithms.
Finally, it should be mentioned that in situations (such as low
dimensional parameter spaces or closed-to-Gaussian posteriors) when
the design of Markov kernels with good mixing properties is not
challenging, our proposed method may not provide significant computational savings 
over more standard SMC or MCMC methods. %For example the case when the dimension of the posterior distribution is low or not complex.

Our main contributions are as follows. We propose the Sequential Ensemble Transform (SET) algorithm, an interacting particle methodology inspired from the data-assimilation literature \cite{Reich2013}, for Bayesian inversion.
Unlike most interacting particle methods that rely on resampling
approaches, the SET method is based on optimal transportation. We
demonstrate empirically that this leads to an algorithm that is less 
affected by particle degeneracy, and requires less computational effort to converge, 
than more standard SMC approaches when used
in complex settings where designing efficient Markov mutation kernels
is not trivial. 
We make SET practical, especially for complex applications,  by 
providing
several adaptation strategies for automating the choice of tuning parameters.
Finally, we establish 
conditions under which, in the limit
when the number of particles approaches infinity, 
the SET method is provably consistent,
i.e., the sequence of particle approximations produced by the SET 
converges weakly towards the underlying target distribution.

\begin{figure}[h!t!b!]
\begin{center}
\subfigure{
  \includegraphics[trim={0cm 0cm 0cm 0cm},clip,width=0.30\columnwidth]{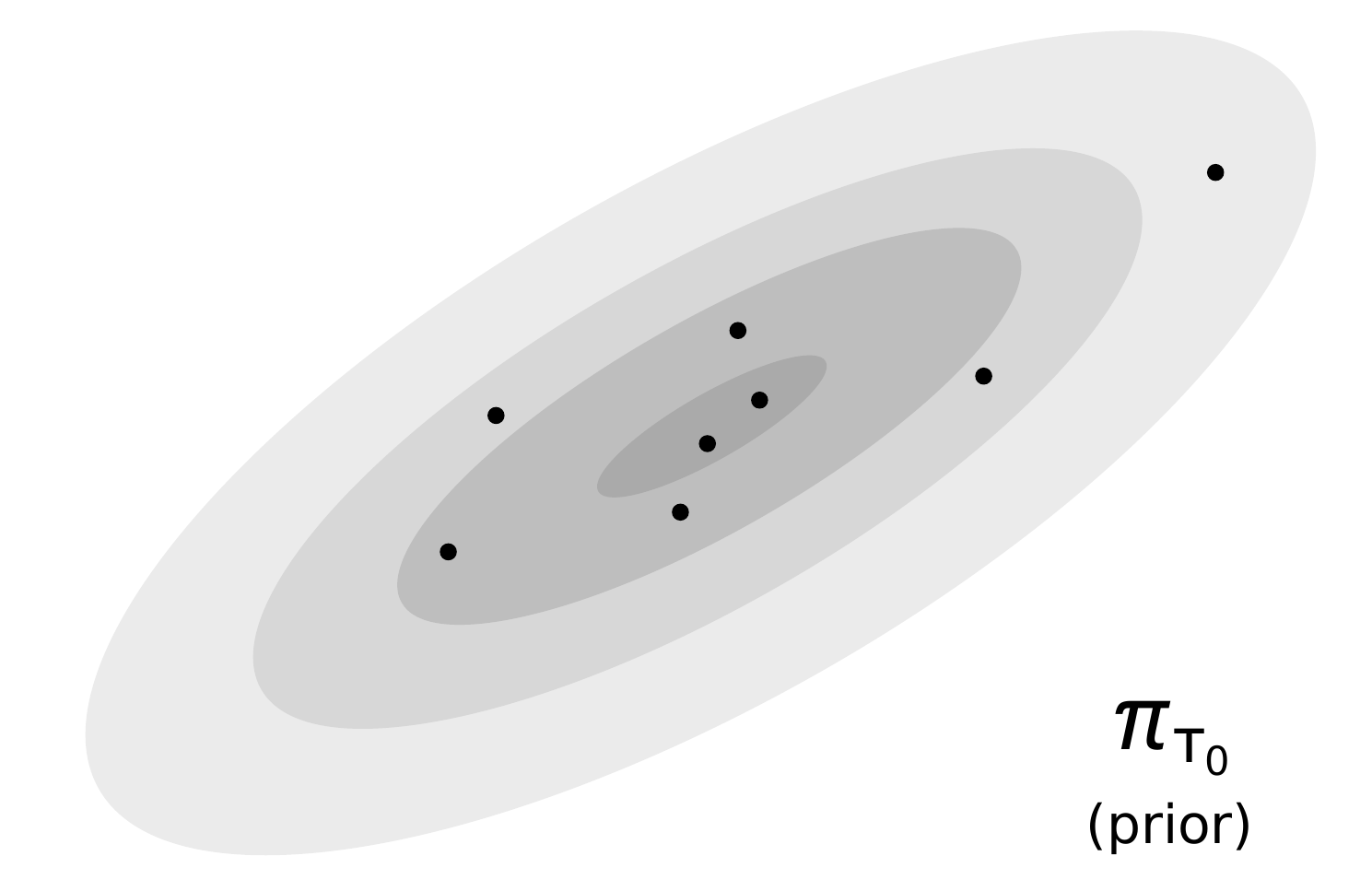}
  }
\subfigure{
  \includegraphics[trim={0cm 0cm 0cm 0cm},clip,width=0.30\columnwidth]{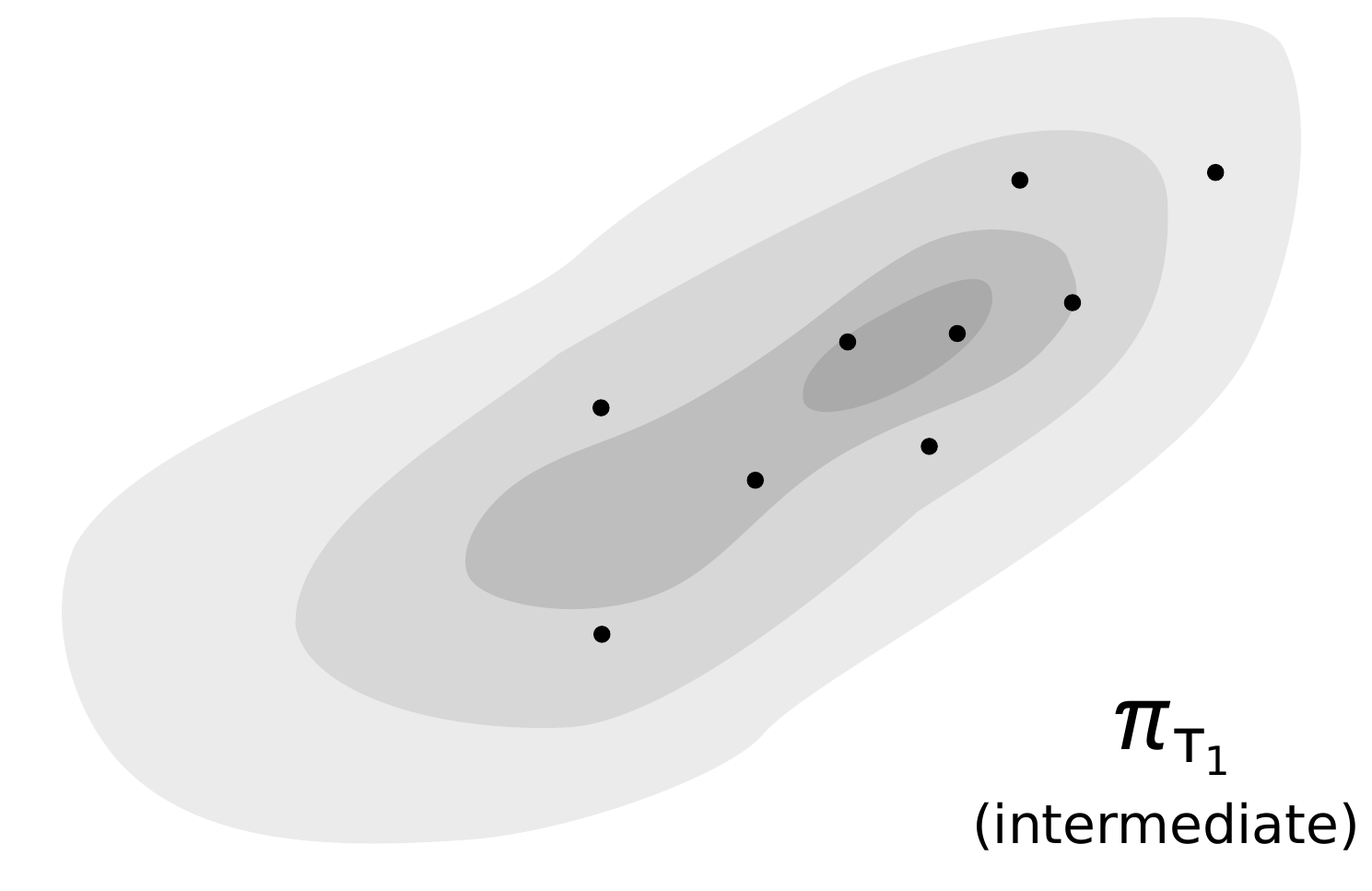}
  }
\subfigure{
  \includegraphics[trim={0cm 0cm 0cm 0cm},clip,width=0.30\columnwidth]{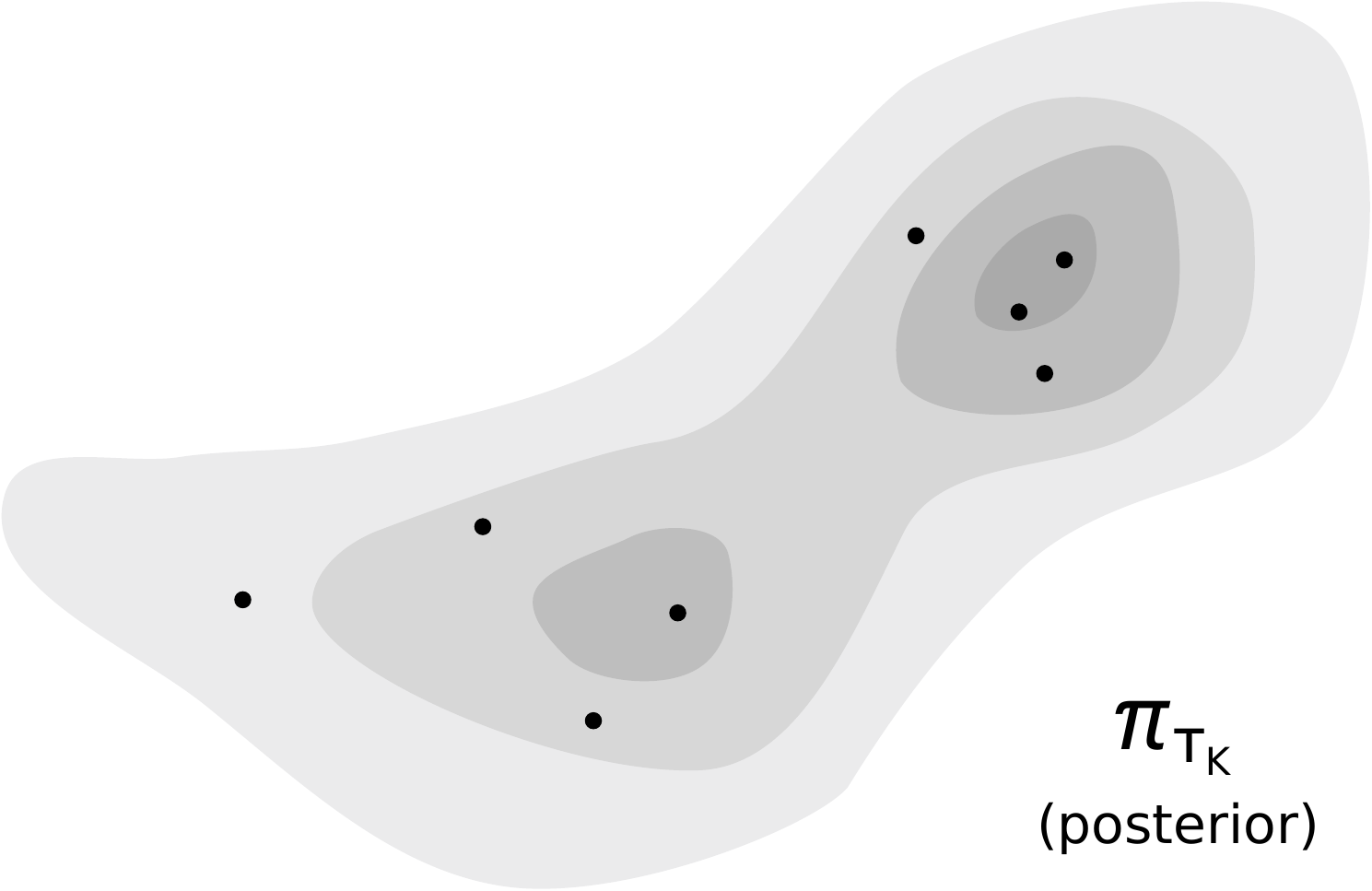}
  }

 \end{center}
 \caption{A representation of the SET method using optimal transport to move particles in parameter space as to 
 represent the posterior}
\label{fig:smcex}
\end{figure}

The article is structured as follows. In Section
\ref{sec:problemStatement}, PDE-constrained Bayesian inverse problems are briefly described. An overview of particle methods and importance sampling is presented in Section \ref{sec:particlemethods}.
Section \ref{subsec:optimaltrans} presents the concept of optimal transport 
and describes the main components of SET method, as well as their asymptotic properties. 
Section \ref{subsec.SET} describes the SET methods in details, as well as several adaptive strategies that can be used to automate several aspects of the method.
Finally, Section \ref{sec:numerical} presents various numerical
results, including a Bayesian inverse problem with a non-linear forward map. 
Section \ref{sec:conclusions} concludes the paper and discusses future work.

\subsection*{Notations and conventions}
Unless stated otherwise, all the state spaces are endowed with a
metric and the associated Borel $\sigma$-algebra. 
The notations $\mu$ and $\nu$ (along with any use of super- or sub-scripts) denote probability distributions.
A sequence of probability distributions
$\{ \mu^N \}_{N \geq 1}$ on $\mathcal{X}$ converges weakly towards the
distribution $\mu$, denoted as $\mu^N \wc \mu$, if for any bounded and continuous 
test function $\phi:\mathcal{X} \to \mathbb{R}$ we have 
that $\int \phi(u) \, \mu^N(du) \to \int \phi(u) \, \mu(du)$ as $N \to \infty$. 
Similarly, a sequence of random probability distribution $\mu_\omega^N$ almost surely 
converges weakly towards $\mu$ if, for $\mathbb{P}$-almost every $\omega$, we have that $\mu^N_\omega \wc \mu$.
The set of probability distributions on a state space $\mathcal{X}$ is
denoted as $\prob(\mathcal{X})$. For a set $S$, the notation $\mathbbm{1}_S$ 
refers to the indicator function of $S$, i.e., the function that equals one for $x \in S$ 
and zero otherwise. For $u \in \mathcal{X}$, the Dirac probability 
distribution $\delta(u)$ is the distribution 
that puts all its probability mass at $u$.

%We will use $\Psi$ and $V$ to denote a potential functions defined on a state space $\mathcal{X}$.
%We introduce the unique notation for the \textit{Bayes Operator}, $\bayesop$ that maps from a probability d%istribution to another probability distribution, 
%along with other script notation to denote the  critical operators
%that are used to perform the particle mapping steps and the probability distribution mapping steps.

\section{Problem Statement}
\label{sec:problemStatement}
%\seclab{infiniteBayes}
Although the methods described in this article are general, for
illustration purposes, we focus on the Bayesian treatment of inverse problems. 
We are interested in estimating a field $u \in \mathcal{X}$, 
where $\mathcal{X}$ denotes a space of functions, 
from a finite set of observations contaminated by additive Gaussian noise,
\begin{align*}
\db =\mc{G}(u) + \etab,
\end{align*}
where $\db = [d_1, \ldots, d_D]^\top \in \mathcal{Y}$ and $\etab \sim \GM{0}{\L}$ is centred Gaussian vector 
with covariance matrix $\L$. 
The operator $\mc{G}:\mathcal{X} \to \mathcal{Y}$ describes the mapping from the parameters to observables. In Bayesian inverse problems and as illustrated in Section \ref{sec:numerical}, estimating the quantity $\mc{G}(u)$ typically involves solving a set of partial differential equations. In order to estimate the uncertainty associated to the necessarily imperfect reconstruction of the parameter $u \in \mathcal{X}$, the Bayesian approach postulates a prior distribution $\mu_{\prior}$ that describes the information available on the parameter $u$ prior to any data collection. Under mild assumptions \cite{Stuart2010}, the Bayesian posterior distribution $\mu_{\post}$ is defined through the change of measure formula
\begin{equation}
\label{eq.posterior}
\frac{d\mu_{\post}}{d\mu_{\prior}}\LRp{\u} 
\propto
\exp\curBK{-\half\snor{\db -\mc{G}\LRp{\u}}^2_{\L}}
\end{equation}
where $\snor{\cdot}_{\L}\equiv \snor{{\L}^{-\half}\cdot}$ denotes the ${\L}^{-\half}$-weighted 
Euclidean norm. In situations when the mapping $\mc{G}$ is non-linear, the posterior 
distribution is typically intractable and numerical methods such as MCMC 
are required to estimate expectations (and other statistics) of observables with respect to the posterior $\mu_{\post}$.

\section{Particle Methods}
\label{sec:particlemethods}
Particle methods approximate probability distributions with weighted mixtures of Diracs, also referred to as {\it particle approximation} in this text.
To construct a particle approximation of the posterior distribution, the SMC and SET approaches proceed by introducing a sequence $\{\mu_k\}_{k=0}^K$ of distributions that interpolates between a distribution that is easy to sample from, i.e. $\mu_0$, and the posterior distribution $\mu_K$. A standard choice for $\mu_0$ is the prior distribution, or a Gaussian approximation of the posterior distribution obtained through efficient deterministic methods. For any index $1 \leq k \leq K$, set
\begin{align} \label{eq.bridge}
\frac{d \mu_k}{d \mu_{k-1}}(u) = \frac{1}{Z_k} \, \Psi_k(u),
\end{align}
for a $\mu_{k-1}$-integrable potential function $\Psi_k: \cX \to (0,\infty)$ and (typically unknown) normalization constant $Z_k > 0$. The SMC algorithm recursively 
constructs particle approximations 
$$\mu^N_k  = \frac{1}{N} \sum_{i=1}^N \, \delta(u^N_{k,i}) \, \approx \mu_k,$$
where $N \geq 1$ denotes the number of particles, by iterating {\it re-weighting}, {\it resampling}, and {\it mutation} operations that are described below. In the remaining of this text, we make use of the following notations that are standard in the Monte-Carlo literature and compactly allow to describe expectations with respect to probability distributions and Markov kernels. For a probability distribution $\mu$ on the state space $\mathcal{X}$ and a $\mu$-integrable test function $\phi:\mathcal{X} \to \mathbb{R}$, set $\mu(\phi) \equiv \int \phi(u) \, \mu(du)$. Similarly, for a Markov kernel $M(u, dv)$, define $(M \phi)(u) \equiv \int \, \phi(v) \, M(u, dv)$. 

\subsection{Re-weighting}
Consider two probability distributions $\mu$ and $\nu$ defined on the same state space $\cX$ 
and related by a change of measure 
(Radon-Nikodym derivative)
\begin{align} \label{eq.chg.proba}
\frac{d \nu}{d \mu}(u) = \frac{1}{Z} \, \Psi(u)
\end{align}
for a $\mu$-integrable potential function $\Psi: \cX \to (0,\infty)$ and a possibly unknown normalization constant $Z > 0$.
Suppose that, for any integer $N \geq 1$, it is possible to generate a set of $N$ 
particles $\{u^N_i\}_{i=1}^{N} \subset \cX$ such that the sequence of equally 
weighted particle approximations, 
\begin{align*}
\mu^N \equiv \frac{1}{N} \, \sum\limits_{i=1}^N \delta(u^N_i),
\end{align*}
converges weakly towards $\mu$ as $N \to \infty$.
Under mild assumptions, the sequence of {\it self-normalized importance sampling} weighted particle approximations $\nu^N$ defined as 
\begin{align} \label{eq.IS}
\nu^N \equiv \sum_{i=1}^N w^N_i \, \delta(u^N_i)
\end{align} 
for normalized weights 
\begin{equation*}
	w^N_i \equiv \frac{\Psi(u^N_i)}{ [\Psi(u^N_1) + \ldots + \Psi(u^N_N)]}
\end{equation*}
converges weakly to $\nu$. For concreteness, define the mapping from 
$\mu^N$ to $\nu^N$ as $\nu^N = \bayesop_{\Psi}(\mu^N)$ where $\bayesop_{\Psi}$ is the so-called 
Bayes operator that transforms a probability distribution $\mu$ into the probability 
distribution $\bayesop_{\Psi}(\mu)$ that satisfies 
$\bayesop_{\Psi}(\mu)(\phi) = \mu(\Psi \, \phi) / \mu(\Psi)$ for any test function $\phi$.
The following proposition shows that, under a mild {\it uniform integrability} condition, 
the convergence $\bayesop_{\Psi}(\mu^N) \wc \bayesop_{\Psi}(\mu)$ holds.
\begin{prop} \label{prop.consistency.reweighting}
Consider a probability distribution $\mu$ and a continuous and 
positive $\mu$-integrable function $\Psi$.
Assume that there exists a continuous $\mu$-integrable function 
$\mathcal{E}:\mathcal{X} \to [1,\infty)$ such that 
\begin{align} \label{eq.controlling.function}
\lim_{t \to \infty} \limsup_{N \to \infty} \; \mu^N\BK{ \mathcal{E} \times \mathbbm{1}_{\mathcal{E}>t}} = 0,
\end{align}
and $\Psi(u) \leq \mathcal{E}(u)$ for $\mu$-almost every $u \in \mathcal{X}$. We have that:
\begin{enumerate}
\item for any (potentially unbounded) continuous test function $\phi$ such that $|\phi| \leq \mathcal{E}$, 
$$\lim_{N \to \infty} \; \mu^N(\phi) = \mu(\phi).$$
\item the sequence $\bayesop_{\Psi}(\mu^N)$ converges weakly towards $\bayesop_{\Psi}(\mu)$.
\end{enumerate}
\end{prop}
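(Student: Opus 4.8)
The plan is to prove the first item by a truncation argument that leverages the uniform integrability condition \eqref{eq.controlling.function}, and then to deduce the second item from the first by writing the Bayes operator as a ratio of two integrals and passing to the limit in numerator and denominator separately.

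For the first item, fix a continuous $\phi$ with $|\phi| \le V$. First I would introduce, for each threshold $t > 0$, a continuous cutoff $\chi_t : [1,\infty) \to [0,1]$ with $\chi_t \equiv 1$ on $[1,t]$ and $\chi_t \equiv 0$ on $[t+1, \infty)$, and set $\phi_t(u) \equiv \phi(u) \, \chi_t(V(u))$. Since $u \mapsto \chi_t(V(u))$ is continuous, takes values in $[0,1]$, and is supported on $\{V \le t+1\}$, the function $\phi_t$ is continuous and bounded (by $t+1$, using $|\phi| \le V$), so the weak convergence $\mu^N \wc \mu$ gives $\mu^N(\phi_t) \to \mu(\phi_t)$ for each fixed $t$. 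The remainder $\phi - \phi_t$ is supported on $\{V > t\}$ and satisfies $|\phi - \phi_t| \le V \, \mathbbm{1}_{V > t}$. I would then use the decomposition
\begin{equation*}
|\mu^N(\phi) - \mu(\phi)| \le |\mu^N(\phi_t) - \mu(\phi_t)| + \mu^N\BK{V \, \mathbbm{1}_{V > t}} + \mu\BK{V \, \mathbbm{1}_{V > t}},
\end{equation*}
take $\limsup_{N \to \infty}$ so that the first term vanishes for each fixed $t$, and finally let $t \to \infty$: assumption \eqref{eq.controlling.function} annihilates the middle term, while $\mu$-integrability of $V$ together with dominated convergence annihilates the last. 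Since the left-hand side does not depend on $t$, this forces $\mu^N(\phi) \to \mu(\phi)$.

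For the second item, I would apply the first item to the continuous functions $\Psi \, \phi$ and $\Psi$. For any bounded continuous $\phi$, the product $\Psi \, \phi$ is continuous and satisfies $|\Psi \, \phi| \le \|\phi\|_\infty \, \Psi \le \|\phi\|_\infty \, V$, so by linearity the first item yields $\mu^N(\Psi \, \phi) \to \mu(\Psi \, \phi)$; taking $\phi \equiv 1$ gives $\mu^N(\Psi) \to \mu(\Psi)$. Because $\Psi > 0$ everywhere and $\mu$ is a probability measure, $\mu(\Psi) > 0$, so the quotient passes to the limit:
\begin{equation*}
\bayesop_{\Psi}(\mu^N)(\phi) = \frac{\mu^N(\Psi \, \phi)}{\mu^N(\Psi)} \longrightarrow \frac{\mu(\Psi \, \phi)}{\mu(\Psi)} = \bayesop_{\Psi}(\mu)(\phi).
\end{equation*}
As this holds for every bounded continuous $\phi$, we conclude $\bayesop_{\Psi}(\mu^N) \wc \bayesop_{\Psi}(\mu)$.

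The main obstacle is the first item: promoting the weak convergence $\mu^N \wc \mu$, which a priori only controls \emph{bounded} continuous test functions, to the possibly unbounded function $\phi$. The delicate balance is that the cutoff must be continuous so that $\phi_t$ remains a legitimate test function for weak convergence, while simultaneously being compatible with the envelope $V$ so that the discarded tail is dominated by $V \, \mathbbm{1}_{V>t}$. Condition \eqref{eq.controlling.function} is exactly what supplies uniform (in $N$) control of this tail mass for the empirical measures, which is precisely the ingredient that ordinary weak convergence cannot provide.
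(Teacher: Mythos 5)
Your proposal is correct and follows essentially the same route as the paper: item 1 by truncating $\phi$ with a continuous cutoff compatible with the envelope $V$ (the paper obtains its cutoff $\rho_t$ from Urysohn's lemma, while you build it explicitly as $\chi_t \circ V$, which is the same device), then using weak convergence on the bounded truncation, condition \eqref{eq.controlling.function} for the empirical tail, and $\mu$-integrability of $V$ for the limiting tail; and item 2 by the identical ratio argument $\bayesop_{\Psi}(\mu^N)(\phi) = \mu^N(\Psi\,\phi)/\mu^N(\Psi)$. The only differences are cosmetic bookkeeping (you retain the term $\mu\BK{V\,\mathbbm{1}_{V>t}}$ and kill it by dominated convergence, whereas the paper phrases this via an $\epsilon$--$T_\epsilon$ threshold).
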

\begin{remark}
The technical condition Equation \eqref{eq.controlling.function}
means that if $\zeta_N$ is a sequence of random variables such that $\zeta_N \sim \mu^N$, 
the sequence of scalar random variables $\overline{\zeta}_N \equiv \mathcal{E}(\zeta_N)$ is uniformly integrable \cite{Williams1991}.
\end{remark}
\begin{proof}
The second assertion is a direct consequence of the first one since 
\begin{gather*}
\bayesop_{\Psi}(\mu^N)(\phi) = \frac{\mu^N(\Psi \, \phi)}{\mu^N(\Psi)}
\qquad \textrm{and} \qquad
\bayesop_{\Psi}(\mu)(\phi)  = \frac{\mu(\Psi \, \phi)}{\mu(\Psi)}, 
\end{gather*}
and $\mu^N(\Psi) \to \mu(\Psi)$ as well as $\mu^N(\Psi \, \phi) \to \mu(\Psi \, \phi)$ for any 
bounded and continuous test function $\phi$. Let us now prove the first assertion.
Since $\mathcal{X}$ is a metric space and $\mathcal{E}$ is continuous, for any threshold $t \geq 0$ there exists (Urysohn's lemma) a separating continuous function $\rho_t: \mathcal{X} \to [0,1]$ (Urysohn's function) such that $\rho_t(u)=1$ on the set $\{u \in \mathcal{X} : \mathcal{E}(u) \leq t-1\}$ and $\rho_t(u)=0$ on the set $\{u \in \mathcal{X} : \mathcal{E}(u) \geq  t\}$. 
Since $\mathcal{E}$ is $\mu$-integrable and $|\phi| \leq \mathcal{E}$ $\mu$-almost everywhere, then for any $\epsilon > 0$ there exists $T_\epsilon \geq 0$ such that $|\mu(\phi) - \mu(\phi \, \rho_t)| < \epsilon$ for any $t \geq T_\epsilon$. Furthermore, since the function $\phi \, \rho_t$ is bounded and continuous and $\mu^N \wc \mu$, we have that $\mu^N(\phi \, \rho_t) \to \mu(\phi \, \rho_t)$. It follows that for any $t > T_\epsilon$ 
\begin{align*}
\limsup_{N \to \infty} |\mu^N(\phi) - \mu(\phi)| &\leq 
\limsup_{N \to \infty} |\mu^N(\phi \, \rho_t) - \mu(\phi)| + 
\limsup_{N \to \infty} |\mu^N(\phi \, (1-\rho_t))| \\
&\leq 
\limsup_{N \to \infty} |\mu^N(\phi \, \rho_t) - \mu(\phi)| + 
\limsup_{N \to \infty} \mu^N(\mathcal{E} \times \mathbbm{1}_{\mathcal{E} > t-1}) \\
&\leq
\epsilon + \limsup_{N \to \infty}  \mu^N(\mathcal{E} \times \mathbbm{1}_{\mathcal{E} > t-1}).
\end{align*}
Equation \eqref{eq.controlling.function} gives the conclusion.
\end{proof}
%\improvement[inline]{ Is this result our contribution? If not, just state the result and point the readers to the proof somewhere? Sometimes we provide proof for some important results to either make the presetation complete or provide some insights from the proof.}

Note that if the potential $\Psi$ is bounded, Proposition \ref{prop.consistency.reweighting} always applies. 
In the standard Monte-Carlo setting where $u^N_i=u_i$ for i.i.d samples $\{u_i\}_{i \geq 0}$ from the distribution $\mu$, more precise estimates are available. The distributions $\mu^N$ and $\nu^N$ are random and one can readily check that 
\begin{equation}
  \label{muNconvergence}
  \vertiii{\mu^N - \mu} \leq \frac{1}{\sqrt{N}}, 
\end{equation}
where we have used the norm defined as
\begin{equation}
\label{eq.IS.bound}
	\vertiii{\mu^N - \mu}^2 \equiv \sup_{\norm{\phi}_{\infty} < 1} \, \E{ \BK{\mu^N(\phi) - \mu(\phi)}^2 }
\end{equation}
%\question[inline]{Have we defined the infinity norm? If not, need to define it here.}
to measure the discrepancy between two random measures. 
Furthermore, \cite[Theorem $2.1$]{Agapiou2015} states that
\begin{align*}
	\vertiii{\mu^N - \mu} \leq \frac{2}{\sqrt{N}} \, \frac{ \mu\BK{\Psi^2 }^{\frac{1}{2}} }{ \mu(\Psi) }.
\end{align*}
The sequence of approximations $\mu^N$ converges at Monte-Carlo rate towards $\mu$.

\subsection{Resampling schemes} 
\label{subsec:multinomialresamp}
In standard SMC methods, as well as the SET method described in this article, one needs to transform a weighted particle approximation of a distribution $\mu$ into an equally 
weighted particle approximation of the same distribution. The multinomial resampling 
scheme  approximates
$\mu^N = \sum_{i=1}^N \, w^N_i \, \delta(u^N_i)$ by the equally weighted particle approximation
\begin{align*}
\mu^N_{\IS} \equiv \frac{1}{N} \sum_{i=1}^N \delta(u^N_{i,\IS})
\end{align*}
where $\{ u^N_{i,\IS} \}_{i=1}^N$ are i.i.d. samples from
$\mu^N$. Equation \eqref{muNconvergence} states that the norm between a distribution and an equally weighted mixture of Dirac masses centred at $N$ i.i.d samples from that distribution 
is less than $1 / \sqrt{N}$. Applying this remark and the fact that $\mu^N_{\IS}$ is precisely an 
equally weighted mixture of Dirac masses centred at $N$ i.i.d samples from $\mu^N$, it follows 
that $\vertiii{\mu^N_{\IS} - \mu^N} \leq 1 / \sqrt{N}$. There are 
more sophisticated approaches, such as the stratified
\cite{Hol2006} and systematic \cite{Douc2005} resampling methods, 
to generate equally weighted particle approximations.
We refer the reader to
\cite{gerber2017negative} for a recent study of theoretical properties
of these typically more statistically efficient resampling
schemes. Unless otherwise stated, all the numerical simulations
presented in this article use the stratified resampling scheme.

For concreteness, we denote by $\resampop$ the resampling operator that
maps a weighted particle approximation to an equally weighted
one. Note that for a given weighted particle approximation
$\mu^N$, the quantity $\resampop(\mu^N)$ is in general a random
probability distribution. The resampling scheme $\resampop$  is called {\it consistent} 
if it maps $\mu^N$, 
a possibly random sequence of distributions that almost surely converges weakly 
towards $\mu$, into another sequence $\resampop(\mu^N)$ 
that almost surely converges weakly towards $\mu$.
It has long been known
\cite{Crisan2002} that the multinomial resampling scheme is consistent in finite 
dimensional Euclidean spaces. As investigated in \cite{Hol2006}, the situation is much more delicate for the stratified and systematic resampling methods.  

\subsection{Mutation} 
\label{subsec:mutation}
Consider a sequence $\{\mu_k\}_{k=0}^K$ of distributions interpolating between a tractable distribution $\mu_0$ and the posterior distribution $\mu_K$ such that for any index $1 \leq k \leq K$ we have
\begin{align*}
\frac{d \mu_k}{d \mu_{k-1}}(u) = \frac{1}{Z_k} \, \Psi_k(u)
\end{align*}
for a $\mu_{k-1}$-integrable potential function $\Psi_k: \cX \to (0,\infty)$. 
For technical reasons, we also assume that $\Psi_k$ is continuous. 
Consider a particle approximation
\begin{align*}
\mu^N_0 = \frac{1}{N} \,  \sum_{i=1}^N \delta(u^N_{0,i})
\end{align*}
of the initial distribution $\mu_0$.
Under mild assumptions, the sequence of equally weighted 
distributions $\mu^N_k = (1/N) \, \sum_{i=1}^N \delta(u^N_{k,i}) $ recursively defined as $\mu^N_k = \resampop \circ \bayesop_{\Psi_k}(\mu^N_{k-1})$ converges in an appropriate sense towards $\mu_k$ as $N \to \infty$. For example, Proposition \ref{prop.consistency.reweighting} shows that, if the potential $\Psi_k$ are bounded and the resampling scheme $\resampop$ is consistent, as soon as $\mu^N_0$ almost surely converges weakly towards $\mu_0$ the sequence $\mu^N_k$ also almost surely converges weakly towards $\mu_k$ as $N \to \infty$.

In most realistic scenarios, though, 
the particle approximation $\mu^N_K$, as an approximation to $\mu_K$, is worse  than 
the direct importance sampling particle approximation $\bayesop_{\Psi_1 \Psi_2 \ldots \Psi_K}(\mu^N_0)$ 
from $\mu_0$ to $\mu_K$ where $(d \mu_K / d \mu_0)(u) \propto [\Psi_1 \Psi_2 \ldots \Psi_K](u)$. 
It is because in that case 
%The main reason for the inefficiency of recursive importance sampling
%is that 
the particles $\{u^N_{K,i}\}_{i=1}^N$ form a subset of $\{u^N_{0,i}\}_{i=1}^N$. 
Consequently, if the initial set of particles $\{u_{0,i}\}_{i=1}^N$ are located in regions of the parameter space where the distribution $\mu_K$ does not have much probability mass, the approximation $\mu^N_K$ to $\mu_K$ can be very poor. 
For importance sampling to work well in high-dimensional situations, the proposal distributions 
need to be chosen very judiciously, and  adaptive importance sampling (AIS) 
\cite{oh1992adaptive, cornuet2012adaptive, cappe2008adaptive, finke2019relationship} can partially remedy this issue.
A standard approach to mitigate this issue 
is to introduce mutation steps, which we now describe. 
For each distribution $\mu_k$ in the interpolating sequence 
of distributions, consider a (mutation) 
Feller Markov kernel $M_k(u, d\widehat{u})$ that leaves the distribution $\mu_k$ invariant. 
Consider the operator $\markovop_k$ 
that transforms a particle approximation $\mu_k^N = (1/N) \, \sum_{i=1}^N \delta(u^N_{k,i})$ into 
$\markovop_k(\mu_k^N) = (1/N) \, \sum_{i=1}^N \delta(v^N_{k,i})$ where, conditionally 
upon $\{u^N_{k,i}\}_{i=1}^N$, the samples 
$\{v^N_{k,i}\}_{i=1}^N$ are independent realizations of $M_k(u^N_{k,i}, d\widehat{u})$. 
The following lemma shows that, as soon as the sequence $\mu_k^N$ almost surely converges  weakly to $\mu_k$, the sequence $\markovop_k(\mu_k^N)$ also almost surely converges weakly to $\mu_k$.\\

\begin{lemma} \label{lem.ergodic.markov}
Let $\mu$ be a probability distribution on a locally compact and $\sigma$-compact metric space $\cX$. Consider $M(u, d\widehat{u})$ a $\mu$-invariant Feller Markov kernel.
For each $N \geq 1$, let $\{u^N_i\}_{i=1}^N \subset \cX$ be such that
\begin{align*}
\frac{1}{N} \sum_{i=1}^N \delta(u^N_i)  \wc \mu.
\end{align*}
For independent random variables $V^N_i \sim M(u^N_i, d\widehat{u})$, we have that, almost surely,
\begin{equation*}
\frac{1}{N} \sum_{i=1}^N \delta(V^N_i) \wc \mu. 
\end{equation*}
\end{lemma}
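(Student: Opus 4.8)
The plan is to reduce the claimed almost-sure weak convergence to a countable family of test functions, controlling each one by splitting off a martingale-type fluctuation. First I would fix a bounded continuous $\phi$ and set $f \equiv M\phi$. Since $M$ is Feller, $f$ is bounded and continuous, and $\mu$-invariance gives $\mu(f) = \mu(M\phi) = \mu(\phi)$. Writing $\nu^N \equiv (1/N)\sum_{i=1}^N \delta(V^N_i)$ and $\mu^N \equiv (1/N)\sum_{i=1}^N \delta(u^N_i)$, I would decompose
\begin{equation*}
\nu^N(\phi) - \mu(\phi) = \frac{1}{N}\sum_{i=1}^N \big(\phi(V^N_i) - f(u^N_i)\big) + \big(\mu^N(f) - \mu(f)\big).
\end{equation*}
The second term tends to $0$ by the hypothesis $\mu^N \wc \mu$ together with $f$ being bounded and continuous.

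For the fluctuation term, note that conditionally on the given points $\{u^N_i\}_{i=1}^N$ the variables $\xi^N_i \equiv \phi(V^N_i) - f(u^N_i)$ are independent, mean-zero (because $\mathbb{E}[\phi(V^N_i)] = (M\phi)(u^N_i) = f(u^N_i)$), and bounded by $2\|\phi\|_\infty$. Expanding the fourth moment of $\sum_i \xi^N_i$ and using independence and centering, only the pure fourth-moment terms and the products of two second moments survive, so $\mathbb{E}\big[(\sum_i \xi^N_i)^4\big] \le C_\phi N^2$ with $C_\phi$ depending only on $\|\phi\|_\infty$. Hence $\mathbb{E}\big[(\nu^N(\phi) - \mu^N(f))^4\big] \le C_\phi / N^2$, which is summable in $N$; Markov's inequality and the Borel--Cantelli lemma then give $\nu^N(\phi) - \mu^N(f) \to 0$ almost surely. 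Combining the two terms yields $\nu^N(\phi) \to \mu(\phi)$ almost surely, for each fixed $\phi$.

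It remains to upgrade this per-function statement to almost-sure weak convergence, where the null set must be made independent of $\phi$. This is exactly where local compactness and $\sigma$-compactness of $\cX$ enter: such a space is second countable, so $C_0(\cX)$ is separable and admits a countable family $\{\phi_m\}_{m\ge 1}$ that determines vague convergence (e.g. a dense subset in supremum norm of a countable subalgebra of $C_c(\cX)$). Applying the previous step to each $\phi_m$ and intersecting the countably many almost-sure events produces a single event of full probability on which $\nu^N(\phi_m) \to \mu(\phi_m)$ for every $m$; since the masses $\nu^N(\cX) = 1$ are uniformly bounded, this gives $\nu^N(\psi) \to \mu(\psi)$ for all $\psi \in C_0(\cX)$, i.e. vague convergence.

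Finally, on the same event I would promote vague convergence to weak convergence using that each $\nu^N$ and the limit $\mu$ are probability measures. Choosing $\chi_k \in C_c(\cX)$ with $0 \le \chi_k \uparrow 1$ (possible by $\sigma$-compactness), vague convergence gives $\liminf_N \nu^N(\mathrm{supp}\,\chi_k) \ge \lim_N \nu^N(\chi_k) = \mu(\chi_k) \uparrow 1$, so no mass escapes to infinity and the sequence is uniformly tight; vague convergence together with tightness is equivalent to weak convergence, i.e. $\nu^N \wc \mu$ almost surely. I expect the main obstacle to be this last passage — making the exceptional null set uniform over all test functions and ruling out escape of mass, which is precisely the role of the topological hypotheses on $\cX$ — rather than the moment estimate, which is routine.
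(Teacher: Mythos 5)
Your proposal is correct and follows essentially the same route as the paper's proof: the same decomposition into a mean term handled by the Feller property together with $\mu^N \wc \mu$, the same fourth-moment bound of order $N^{-2}$ combined with Borel--Cantelli for the fluctuations, and the same reduction to a countable dense family of compactly supported continuous functions furnished by local compactness and $\sigma$-compactness. The only difference is one of completeness: you explicitly carry out the final upgrade from vague to weak convergence (no escape of mass, via an exhaustion $\chi_k \uparrow 1$ in $C_c(\cX)$), a step the paper's proof leaves implicit.
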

\begin{proof}%[of Lemma \ref{lem.ergodic.markov}]
Since $\cX$ is a locally compact and $\sigma$-compact metric space, there exists a countable and dense (for the supremum norm) subset  $\mathcal{H}$ of the set of continuous functions with compact support in $\cX$. One needs to prove that for any $\phi \in \mathcal{H}$ we have that $\lim\limits_{N \to \infty} \, (1/N) \sum_{i=1}^N \phi(v^N_i) = \mu(\phi)$ almost surely. Since the 
function $M \phi$ is continuous and bounded,
\begin{align*}
\lim_{N \to \infty}  \, \mathbb{E}\sqBK{ \frac{1}{N} \sum_{i=1}^N \phi(v^N_i) }
&=
\lim_{N \to \infty}  \, \frac{1}{N} \sum_{i=1}^N (M \phi)(u^N_i) =
\mu( M \phi ) = \mu(\phi).
\end{align*}
Since $\phi$ is bounded, the moment of order four of the ergodic sum $\frac{1}{N} \sum_{i=1}^N [\phi(v^N_i) - (M \phi)(u^N_i)]$ is upper bounded by a constant multiple of $N^{-2}$. The Borel-Cantelli lemma gives the conclusion.
\end{proof}

Leveraging these Markov mutation kernels, we now define the sequence of equally 
weighted particle approximations 
$\{ \mu^N_k \}_{k=0}^K$ recursively as
\begin{equation}
\label{SMC}
\mu^N_k = \markovop_k \circ \resampop \circ \bayesop_{\Psi_k} (\mu^N_{k-1}).
\end{equation}
The Markov mutations ensure that, in general, the particles $\{u^N_{k,i}\}_{i=1}^N$ do not form a subset of $\{u^N_{0,i}\}_{i=1}^N$.
The particle algorithm resulting from \eqref{SMC} is a special case of
Sequential Monte Carlo (SMC) samplers \cite{DelDoucetJasra06}.  Note that, 
in Bayesian inverse problems, simulating from the Markovian 
kernel $M_k$ typically requires evaluating the computationally expensive 
forward map. Moreover, as explained in the introduction, whilst well-designed 
Markovian kernels can greatly enhance the statistical efficiency of the resulting 
algorithm, it is notoriously difficult to design well-mixing mutation 
kernels in high-dimensional settings or fir exploring distribution with complex dependency structures.

\section{Optimal Transport}
\label{subsec:optimaltrans}
For technical simplicity, we assume in this section that the state
space $\cX$ is a finite dimensional Euclidean space with norm denoted
by $\| \, \cdot \, \|$.  For two distributions $\mu$ and $\nu$ related by a change 
of probability $d\nu / d\mu (u) \propto \Psi(u)$, the Monge-Kantorovich optimal transport
approach provides an alternate methodology for building a particle approximation 
of a distribution $\nu$ out of a particle approximation of $\mu$. 
To the best of our knowledge, the idea
was first proposed in \cite{Reich2013}, and further developed in
\cite{Gregory2016,Chustagulprom2016,graham2019scalable}, in the context of
data-assimilation of dynamical systems.
For two probability distributions $\mu$ and $\nu$, let 
$\mathcal{P}(\mu,\nu)$ be the set of probability couplings between $\mu$ and $\nu$,
i.e. the convex set of probability distributions on $\cX \times \cX$
that admit $\mu$ and $\nu$ as marginals. For a cost function
$\cost:\cX \times \cX \to [0,\infty)$, the optimal transportation 
problem seeks to minimize the transport cost 
$\mathbb{E}_{\gamma}\sqBK{\cost(\hat{u},\hat{v})}$, for $(\hat{u},\hat{v}) \sim \gamma$, 
over the set of all possible couplings $\gamma \in \mathcal{P}(\mu, \nu)$,
\begin{align}
	\gamma^{\OT} \; = \; \argmin \Big\{ \gamma \mapsto \mathbb{E}_{\gamma}\sqBK{ \cost(\hat{u},\hat{v})}
\quad \textrm{with} \quad
\gamma \in \mathcal{P}(\mu,\nu) \Big\}. 
\label{eqn:originalDOT}
\end{align}
On an Euclidean space, a standard choice is the quadratic cost function $\cost(u,v) = \|u-v\|^2$. 
For cost functions of the type $\cost(u,v) = h(v-u)$ for a strictly convex function $h$, Brenier's theorem \cite{Brenier1991} states that, 
if $\mu$  is compactly supported and has a density with respect to the Lebesgue measure, 
there exists a  deterministic map $\mathbf{T}:\cX \to \cX$, uniquely defined on the support of $\mu$, such that the optimal coupling $\gamma^{\OT}$ is obtained by pushing-forward the distribution $\mu$ through the deterministic function $(\mathbf{Id}, \mathbf{T}): \cX \mapsto \cX \times \cX$. 
That is, for a test function $\phi: \cX \times \cX \to \mathbb{R}$, the quantity 
$\gamma^{\OT}(\phi)$ can also be expressed as 
$\mathbb{E}_{\mu}[\phi(\hat{u}, \mathbf{T}(\hat{u}) )]$ for $\hat{u} \sim \mu$. For more general cost functions, 
the situation is more delicate \cite{evans1999differential,trudinger2001monge,caffarelli2002constructing,Ambrosio2003a}.

\subsection{Approximation of the Bayes operator}
Consider a weighted particle approximation $\mu^N = \sum_{i=1}^N \alpha_i \,
\delta(u^N_i)$ of the distribution $\mu$ and, for a potential function $\Psi:\mathcal{X} \to (0; \infty)$, 
the probability distribution
\begin{equation}
\bayesop_\Psi(\mu^N) \equiv \sum_{i=1}^N \beta_i \, \delta(u^N_i) \equiv \nu^N,
\label{betaApprox}
\end{equation}
with $\beta_i = \alpha_i \, \Psi(u^N_i) / [\alpha_1 \, \Psi(u^N_1) + \ldots + \alpha_N \, \Psi(u^N_N)]$.
The optimal coupling $\gamma^{\OT,N}$ between $\mu^N$ and $\nu^N$ is supported on the finite set 
$\{(u^N_i, u^N_j)\}_{1 \leq i, j \leq N}$ and can thus be expressed as
\begin{equation*}
\gamma^{\OT,N} = \sum_{i,j=1}^N \C^{\OT,N}_{ij} \, \delta(u^N_i) \otimes \delta(u^N_j).
\end{equation*}
where $\delta(u) \otimes \delta(v)$ denotes the Dirac mass centred at $(u,v) \in \cX \times \cX$.
Here, the coupling matrix $\C^{\OT,N} \in \mathbb{R}_+^{N,N}$ is the solution of the linear programming problem that consists in minimizing the matrix functional 
\begin{align} \label{eq.ot.map}
C \mapsto \sum_{i,j=1}^N C_{i,j} \times \cost(u^N_i, u^N_j) \equiv \bra{ C, \costmat}_{\text{F}}
\qquad \text{with} \qquad
\costmat_{i,j} = \cost(u^N_i, u^N_j)
\end{align}
over the convex set $\mathcal{P}(\alpha, \beta)$ of matrices with marginals $\alpha$ and $\beta$, i.e. the set of matrices $C \in \mathbb{R}_+^{N,N}$ such that $\sum_j C_{i_0,j} = \alpha_{i_0}$ and $\sum_i C_{i,j_0} = \beta_{j_0}$ for all $1 \leq i_0, j_0 \leq N$. In Equation \eqref{eq.ot.map}, the quantity $\bra{ C, \costmat}_{\text{F}} = \sum_{i,j} C_{i,j} \, \costmat_{i,j}$ is the Frobenius inner product between the coupling matrix $C$ and the {\it cost matrix} $\costmat \in \mathbb{R}^{N,N}$.
%The cost function $\cost(\cdot,\cdot)$ is the quadratic cost function for the numerical results presented in this work.
More details are given at the end of this section.

We now describe how, once the coupling matrix $\C^{\OT,N}$ has been computed, a particle approximation of the distribution $\bayesop_\Psi(\mu^N)$ can be constructed: we stress that, in order to implement this method, the coupling matrix $\C^{\OT,N}$ is the only quantity that needs to be computed.
For motivating the methodology, assume that the optimal coupling $\gamma^\OT \in \mathcal{P}(\mu, \nu)$ 
is described by a deterministic map
$\mathbf{T}:\cX \to \cX$ and consider a test function $\phi: \cX \to
\mathbb{R}$. 
Since $\mu^N$ is a particle approximation to $\mu$, 
the quantity $\mu^N(\phi \circ \mathbf{T})=\sum_{i=1}^N \alpha_i \, \delta(\mathbf{T}(u^N_i))$ 
is expected to be an approximation of $\mu(\phi \circ \mathbf{T}) = \nu(\phi)$.  
Consequently, it is reasonable to expect
\begin{align*}
\sum_{i=1}^N \alpha_i \, \delta(\mathbf{T}(u^N_i))
\end{align*}
to be a particle approximation of $\nu$. Although the optimal transformation 
$\mathbf{T}$ is generally computationally intractable (i.e. it is never actually computed in our proposed method) one can resort to an approximation scheme. Note that the quantity
$\mathbf{T}(u^N_i)$ can be expressed as a conditional expectation
\begin{align*}
	\mathbf{T}(u^N_i) 
	\; = \; \mathbb{E}[\hat{v} \mid \hat{u} = u^N_i]
	\qquad \textrm{for} \qquad (\hat{u},\hat{v}) \sim \gamma^{\OT},
\end{align*}
since the pair $(\hat{u},\hat{v})$ has the same distribution as $(\hat{u}, \mathbf{T}(\hat{u}))$ for $\hat{u} \sim \mu$. 
This motivates the approximation
\begin{align} \label{eq.particle.transport}
\mathbf{T}(u^N_i) \; \approx \;  \mathbb{E}\sqBK{ \hat{v}^N \mid \hat{u}^N = u^N_i } 
= 
\frac{\sum_{j=1}^N \C^{\OT,N}_{ij} \, u^N_j}{\sum_{j=1}^N \C^{\OT,N}_{ij} } 
= 
\frac{1}{\alpha_i } \, \sum_{j=1}^N \C^{\OT,N}_{ij} \, u^N_j
\end{align}
with $(\hat{u}^N, \hat{v}^N) \sim \gamma^{\OT,N}$. The newly created particles 
$\{ u^{\OT,N}_i \}_{i=1}^N$ defined as
\begin{align} \label{eq.ot.particle.map}
u^{\OT,N}_i \equiv \frac{1}{\alpha_i } \, \sum_{j=1}^N \C^{\OT,N}_{ij} \, u^N_j
\end{align}
are convex combinations of the original particles $\{u^N_1, \ldots, u^N_N\}$ and thus all lie in the convex hull of the set of original particles. 
In summary, the computational optimal transport executed in the SET algorithm proceeds by first solving for $\C^{\OT,N}$ given the constraints described by Equation \ref{eq.ot.map}. In a second stage, the coupling matrix $\C^{\OT,N}$ is then used to transport the particles following Equation \ref{eq.ot.particle.map}. 
For concreteness and in accordance with the previous sections, we denote by $\OTop_\Psi$ the operator that realizes the mapping
\begin{equation}
\label{eq:otOp}
\OTop_\Psi \left ( \sum_{i=1}^N \alpha_i \, \delta(u^N_i) \right ) \equiv 
\sum_{i=1}^N \alpha_i \, \delta(u^{\OT,N}_i) =
\sum_{i=1}^N \alpha_i \, \delta\BK{ \frac{1}{\alpha_i } \, \sum_{j=1}^N \C^{\OT,N}_{ij} \, u^N_j }.
\end{equation}
Similar to the operator $\resampop \circ \bayesop_{\Psi}$, the operator $\OTop_\Psi$ maps an equally weighted particle approximation of a probability distribution $\mu$ into an equally weighted particle approximation of $\bayesop_\Psi(\mu)$. However, unlike $\resampop \circ \bayesop_{\Psi}$, the support of the particle approximation $\mu^N$ and $\OTop_\Psi(\mu^N)$ are typically disjoint.\\
%\info[inline]{This sentence contradicts the previous one since $\alpha_i$'s are not equal to each other.}
%
%
%
\begin{algorithm}[H]
\label{algo.ot.transport}
\caption{Optimal Transportation operator $ \OTop_\Psi$}
{\bf Weights computation:} define 
$$\beta_i = \alpha_i \, \Psi(u^N_i) / [\alpha_1 \, \Psi(u^N_1) + \ldots + \alpha_N \, \Psi(u^N_N)].$$
{\bf Cost matrix:} build the matrix $\costmat \in \mathbb{R}^{N,N}$ defined in \eqref{eq.ot.map}.\\
{\bf Optimal Transport:} compute $\C^{\OT,N} = \argmin_{\mathcal{P}(\alpha, \beta)} C \mapsto \bra{C, \costmat}_{\text{F}}$.\\
{\bf Transportation:} set $u_i^{\OT,N} = (1 / \alpha_i ) \, \sum_{j=1}^N \C^{\OT,N}_{ij} \, u^N_j$ and define
\begin{align*}
 \OTop_\Psi(\mu^N) = 
\sum_{i=1}^N \alpha_i \, \delta(u^{\OT,N}_i) 
\end{align*}
\end{algorithm}
Algorithm \ref{algo.ot.transport} summarizes the optimal transport approach to approximating the Bayes operator that transforms a particle approximation $\mu^N = \sum_{i=1}^N \alpha_i \,
\delta(u^N_i)$ of a distribution $\mu$ into a particle approximation $\OTop_\Psi(\mu^N)$ of the distribution $\nu = \bayesop_\Psi(\mu)$,
\begin{align*}
\mu^N = \sum_{i=1}^N \alpha_i \,\delta(u^N_i)
\quad 
\autorightarrow{\vbox{\hbox{\scriptsize Optimal Transport}\vskip-5pt}}{\vbox{\hbox{$\OTop_\Psi$}\vskip-5pt}} 
\quad
\sum_{i=1}^N \alpha_i \, \delta(u^{\OT,N}_i) \equiv \OTop_\Psi(\mu^N).
\end{align*}
The only potentially computationally expensive step is the computation of the coupling matrix $\C^{\OT,N}$. The computational costs are discussed at the start of Section \ref{sec:numerical} and we refer the reader to \cite{peyre2019computational} for a book-length treatment of the computational aspects associated to optimal transportation problems.

\subsection{Consistency} \label{sec.consistency}
Consider a potential function $\Psi:\mathcal{X} \to (0,\infty)$ and
two distributions $\mu$ and $\nu = \bayesop_\Psi(\mu)$. In this
section, we generalize and extend Theorem $1$ of \cite{Reich2013} to
prove that, under mild assumptions, the optimal transport operator
$\OTop_\Psi$ transforms a sequence $\mu^N \wc \mu$ into a sequence
$\OTop_\Psi(\mu^N)$ that converges weakly to $\bayesop_\Psi(\mu)$. 

%
% DETERMINISTIC COUPLING ASSUMPTION
%
\begin{assumption} [Unique Deterministic Coupling]
\label{assump.deterministic.coupling} 
The optimal transport problem between $\mu$ and $\bayesop_\Psi(\mu)$ with cost function $\cost$ admits a unique solution $\gamma$ that can be realized by a deterministic transport map $\mathbf{T}:\cX \to \cX$. \\
\end{assumption}

The problem of existence and uniqueness of the solution to an optimal
transport problem is well-studied. Under mild assumptions (see
McCann's main theorem \cite{Mccann1995}), the set of couplings between
$\mu$ and $\nu$ is weakly compact and the functional $\mu \mapsto
\mathbf{E}_\mu[\cost(u,v)]$ is continuous in the appropriate topologies,
ensuring the existence of an optimal coupling.  The uniqueness and
regularity properties of the optimal transport map are more delicate to establish 
and we refer to \cite{Cavalletti2015} for recent developments. To proceed to
the main result of this section we further assume the following.

%
% REGULARITY ASSUMPTION
%
\begin{assumption}[Regularity of the Transport Map] \label{assump.reg}
Let Assumption \ref{assump.deterministic.coupling} holds for a 
deterministic map $\mathbf{T}:\cX \to \cX$. For any bounded and Lipschitz 
function $\phi:\cX \to \mathbb{R}$ and sequence $\mu^N$ that converges 
weakly to $\mu$, 
we have that $\mu^N(\phi \circ \mathbf{T}) \to \mu(\phi \circ \mathbf{T})$.\\
\end{assumption}

The continuous mapping theorem \cite{Mann1943}
shows that Assumption \ref{assump.reg} is satisfied provided that
the set of discontinuities of $\mathbf{T}$ has zero measure under $\mu$. In particular, Assumption \ref{assump.reg}  holds in the case when the optimal map $\mathbf{T}$ is continuous.
Theorem \ref{thm.consistency} below shows that, under mild growth and regularity assumptions on the optimal transport map $\mathbf{T}:\cX \to \cX$, the optimal transport scheme $\OTop_\Psi$ is consistent as the number of particles $N \geq 1$ approaches infinity.

\begin{theorem} \label{thm.consistency}
Consider a potential function $\Psi:\cX \to (0;\infty)$ and two probability distributions $\mu$ and $\nu = \bayesop_\Psi(\mu)$ on the state space $\cX$. Assume that Assumptions \ref{assump.deterministic.coupling} and \ref{assump.reg} are satisfied for a deterministic optimal map $\mathbf{T}:\cX \to \cX$. Consider further a sequence of weighted particle approximations
\begin{align*}
	\mu^N = \sum_{i=1}^N \alpha^N_i \, \delta(u^N_i)
\end{align*}
that converges weakly to $\mu$, and such that $\bayesop_\Psi(\mu^N)$ 
converges weakly to $\bayesop_\Psi(\mu)$. If the growth assumption
\begin{align}
\label{eq.uniform.integrability}
\limsup_{N \to \infty} \quad \mu^N(u \mapsto |\mathbf{T}(u)|^p) + \bayesop_\Psi(\mu^N)(u \mapsto |u|^p) < \infty,
\end{align}
is satisfied for some exponent $p>1$, we have that 
\begin{align}
\label{eq.otconv}
	\OTop_\Psi(\mu^N) \wc \bayesop_\Psi(\mu) \equiv \nu.
\end{align}
\end{theorem}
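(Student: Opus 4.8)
The plan is to verify weak convergence against the convergence-determining class of bounded Lipschitz functions: on the finite-dimensional Euclidean space $\cX$, one has $\OTop_\Psi(\mu^N) \wc \nu$ as soon as $\OTop_\Psi(\mu^N)(\phi) \to \nu(\phi)$ for every bounded Lipschitz $\phi:\cX \to \mathbb{R}$. So I would fix such a $\phi$ with Lipschitz constant $L$ and write, using definition \eqref{eq:otOp}, $\OTop_\Psi(\mu^N)(\phi) = \sum_{i=1}^N \alpha^N_i \, \phi(u^{\OT,N}_i)$. The natural comparison object is the \emph{idealized} transported approximation $\sum_{i} \alpha^N_i \, \phi(\mathbf{T}(u^N_i)) = \mu^N(\phi \circ \mathbf{T})$, which by Assumption \ref{assump.reg} converges to $\mu(\phi \circ \mathbf{T}) = \nu(\phi)$. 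It therefore remains to show that the discrepancy vanishes, and the Lipschitz property controls it by
\[
\Big| \sum_{i} \alpha^N_i \big[ \phi(u^{\OT,N}_i) - \phi(\mathbf{T}(u^N_i)) \big] \Big| \;\le\; L \sum_{i} \alpha^N_i \, \big| u^{\OT,N}_i - \mathbf{T}(u^N_i) \big|.
\]

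Next I would exploit the conditional-expectation representation \eqref{eq.particle.transport}, namely $u^{\OT,N}_i = \mathbb{E}[\hat{v}^N \mid \hat{u}^N = u^N_i]$ for $(\hat{u}^N,\hat{v}^N) \sim \gamma^{\OT,N}$. Since $\mathbf{T}(\hat{u}^N)$ is measurable with respect to $\hat{u}^N$, the conditional Jensen inequality applied to the convex map $z \mapsto |z|$ yields
\[
\sum_{i} \alpha^N_i \big| u^{\OT,N}_i - \mathbf{T}(u^N_i) \big| = \mathbb{E}_{\gamma^{\OT,N}}\big[ \, \big| \mathbb{E}[\hat{v}^N - \mathbf{T}(\hat{u}^N) \mid \hat{u}^N] \big| \, \big] \;\le\; \mathbb{E}_{\gamma^{\OT,N}}\big[ |\hat{v}^N - \mathbf{T}(\hat{u}^N)| \big].
\]
The entire problem thus reduces to proving that $\mathbb{E}_{\gamma^{\OT,N}}[g] \to 0$ for the (possibly unbounded and possibly discontinuous) function $g(u,v) \equiv |v - \mathbf{T}(u)|$, which is the content I expect to be the crux.

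The central step is to establish $\gamma^{\OT,N} \wc \gamma^{\OT}$, where $\gamma^{\OT} = (\mathbf{Id}, \mathbf{T})_{\#}\mu$. The marginals $\mu^N \wc \mu$ and $\bayesop_\Psi(\mu^N) \wc \nu$ are tight, hence by Prokhorov the family of couplings $\{\gamma^{\OT,N}\}$ (which has these marginals) is tight. Any subsequential weak limit $\gamma$ is a coupling of $\mu$ and $\nu$; by the standard stability of optimal couplings---lifting a competitor for $\gamma$ to near-optimal competitors for $\gamma^{\OT,N}$ via the gluing lemma and passing to the limit using lower semicontinuity of the cost---the limit $\gamma$ is itself optimal. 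Assumption \ref{assump.deterministic.coupling} then forces $\gamma = \gamma^{\OT}$, so the full sequence converges, $\gamma^{\OT,N} \wc \gamma^{\OT}$.

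Finally I would pass $g$ through this weak limit. Under $\gamma^{\OT}$ we have $\hat{v} = \mathbf{T}(\hat{u})$ almost surely, so $\gamma^{\OT}(g) = 0$; two obstructions block a naive argument, and each is removed by the hypotheses. Unboundedness is handled by the growth condition \eqref{eq.uniform.integrability}: since $\mathbb{E}_{\gamma^{\OT,N}}[|\mathbf{T}(\hat{u}^N)|^p] = \mu^N(u \mapsto |\mathbf{T}(u)|^p)$ and $\mathbb{E}_{\gamma^{\OT,N}}[|\hat{v}^N|^p] = \bayesop_\Psi(\mu^N)(u \mapsto |u|^p)$ are uniformly bounded for some $p>1$, the bound $|g|^p \le 2^{p-1}(|\hat{v}^N|^p + |\mathbf{T}(\hat{u}^N)|^p)$ shows $\{g\}$ has uniformly bounded $p$-th moments under $\gamma^{\OT,N}$, hence is uniformly integrable. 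Discontinuity of $\mathbf{T}$ is handled by noting that the discontinuity set of $g$ lies in $D_{\mathbf{T}} \times \cX$, where $D_{\mathbf{T}}$ is the discontinuity set of $\mathbf{T}$; under the regularity underlying Assumption \ref{assump.reg} (in particular $\mu(D_{\mathbf{T}}) = 0$, which holds when $\mathbf{T}$ is continuous) we get $\gamma^{\OT}(D_{\mathbf{T}} \times \cX) = \mu(D_{\mathbf{T}}) = 0$, so $g$ is continuous $\gamma^{\OT}$-almost everywhere. The extended continuous mapping theorem combined with uniform integrability then gives $\mathbb{E}_{\gamma^{\OT,N}}[g] \to \gamma^{\OT}(g) = 0$, which closes the argument. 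The main obstacle I anticipate is precisely this last transfer: reconciling the stability of the \emph{discrete} optimal couplings with the need to integrate an unbounded, merely almost-everywhere continuous functional, for which the $p>1$ moment control in \eqref{eq.uniform.integrability} and the uniqueness in Assumption \ref{assump.deterministic.coupling} are both indispensable.
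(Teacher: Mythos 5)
Your proposal is correct and follows essentially the same route as the paper's own proof: reduction to bounded Lipschitz test functions, comparison of $\OTop_\Psi(\mu^N)(\phi)$ with $\mu^N(\phi\circ\mathbf{T})$ via Assumption \ref{assump.reg}, a Lipschitz/Jensen reduction to showing $\gamma^{\OT,N}(F)\to 0$ for $F(u,v)=|v-\mathbf{T}(u)|$, stability of the discrete optimal couplings $\gamma^{\OT,N}\wc\gamma^{\OT}$ (which the paper obtains by citing \cite[Corollary $5.20$]{Villani2008} and you re-derive via Prokhorov, lower semicontinuity of the cost, and the uniqueness in Assumption \ref{assump.deterministic.coupling}), and finally the $p$-moment bound \eqref{eq.uniform.integrability} to get uniform integrability of $F$. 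The only point of difference is that you are more careful at the last step: the paper asserts $\gamma^{\OT,N}(F)\to\gamma(F)$ directly from weak convergence plus the moment bound, whereas you correctly flag that $F$ may be discontinuous and that one additionally needs $F$ to be continuous $\gamma^{\OT}$-almost everywhere (e.g. $\mu(D_{\mathbf{T}})=0$), which is the implicit content of Assumption \ref{assump.reg}.
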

\begin{proof}
Let $\gamma^{\OT,N} = \sum_{i,j} \C^N_{i,j} \, \delta(u^N_i) \otimes \delta(u^N_j)$ be the optimal coupling between $\mu^N$ and  $\bayesop_\Psi(\mu^N)$.
By assumption, $\mu^N \wc \mu$ and $\nu^N \equiv \bayesop_\Psi(\mu^N) \wc \bayesop_\Psi(\mu) \equiv \nu$ and there is a unique optimal  coupling $\gamma^{\OT}$ between $\mu$ and $\nu$. By compactness (see, e.g. \cite[Corollary $5.21$]{Villani2008}), we have that $\gamma^{\OT,N} \wc \gamma$ as $N \to \infty$.

To show the  weak convergence of $\OTop_\Psi(\mu^N)$ towards $\nu$, it suffices to prove that for any Lipschitz and bounded test function $\phi$ we have that 
$\OTop_\Psi(\mu^N) \to \nu(\phi)$. Assumption \ref{assump.reg} implies 
$\mu^N(\phi \circ \mathbf{T}) \to \mu(\phi \circ \mathbf{T}) = \nu(\phi)$. Consequently, it suffices to show that the difference 
$\OTop_\Psi(\mu^N) - \mu^N(\phi \circ \mathbf{T})$ converges to zero as $N \to \infty$, i.e.,
%. One can concentrate on proving that
%
\begin{align*}
\lim_{N \to \infty} \, \sum_{i=1}^N \alpha^N_i \, \left| \phi\BK{ \frac{1}{\alpha^N_i} \, \sum_{j=1}^N \C^N_{ij} \, u^N_j} - \phi\BK{ \mathbf{T}(u^N_i) } \right| = 0.
\end{align*}
Since $\phi$ is Lipschitz, and $ \sum_{j=1}^N \C^N_{ij} = \alpha^N_i$,  it is sufficient to show that %this can be reduced to proving that
\begin{align*}
\lim_{N \to \infty} \, \sum_{i,j}^N \C^N_{ij} \, \left| u^N_j -  \mathbf{T}(u^N_i) \right| 
= 0.
\end{align*}
Note that $\sum_{i,j}^N \C^N_{ij} \, \left| u^N_j -  \mathbf{T}(u^N_i) \right| = \gamma^{\OT,N}(F)$ with $F(u,v) = |v - \mathbf{T}(u)|$. Since $F^p(u,v) \lesssim |v|^p + |\mathbf{T}(u)|^p$, 
%\question[inline]{What is the condition (on p for example) for this inequality to hold? Do we have a reference to cite? I don't see it directly from multinomial theorem.}
% F is convex, p < \infty, apply Jensens ineq (really just convexity property)
assumption \eqref{eq.uniform.integrability} yields that $\limsup_N \, \gamma^{\OT,N}(F^{p}) < \infty$.
%\question[inline]{$\epsilon$ comes out of blue?}
Since $\gamma^{\OT,N} \wc \gamma$, the bound $\limsup_N \, \gamma^{\OT,N}(F^{p}) < \infty$ implies that the sequence $\gamma^{\OT,N}(F)$ converges towards $\gamma(F)$.
Since $\gamma(F)=0$, the conclusion follows.
\end{proof}

%
% SEQUENTIWL ENSEMBLE TRANFORM
%
\section{Sequential Ensemble Transform}
\label{subsec.SET}
In this section, we describe our proposed methodology, the {\it Sequential Ensemble Transform} (SET), 
prove that it is consistent in the limit of infinitely many particles, and discuss adaptation strategies that 
are important for practical implementations of the method.

\subsection{High-level description and consistency}
As in Section \ref{sec:particlemethods}, consider a sequence
$\{\mu_i\}_{i=0}^K$ of distributions that interpolates between a
distribution $\mu_0$  and the posterior distribution $\mu_K$. For any index $1 \leq k \leq K$ we have that $(d \mu_k
/ d \mu_{k-1})(u) = (1 / Z_k) \, \Psi_k(u)$ for a
$\mu_{k-1}$-integrable and continuous potential function $\Psi_k: \cX
\to (0,\infty)$. In this section, we assume the following.
%
% DETERMINISTIC COUPLING ASSUMPTION
%
\begin{assumption} 
\label{assump.SOT} 
The sequence of probability distributions $\{\mu_k\}_{k=0}^K$ is such that:
\begin{enumerate}
\item for any $0 \leq k \leq K$, the support of $\mu_k$ is bounded,
\item for any $1 \leq k \leq K$, the pair of distributions $(\mu_{k-1}, \mu_k)$ satisfies Assumptions \ref{assump.deterministic.coupling} and \ref{assump.reg}.
\end{enumerate}
\end{assumption}
Instead of constructing a sequence of particle approximations to the
intermediate distributions $\mu_k$ through importance
sampling-resampling methods, consider the following approach that
leverages optimal transport. Let $\mu_0^N = (1/N) \, \sum_{i=0}^N
\delta(u^N_{0,i})$ be an equally-weighted particle approximation of
the initial distribution $\mu_0$. Define the equally weighted particle
approximations $\mu^N_k$ through the recursion formula
\begin{align} \label{eq.SOT}
\mu^N_k = \markovop_k \circ \OTop_{\Psi_k} (\mu^N_{k-1}),
\end{align}
where $\markovop_k$ is the operator associated to a $\mu_k$-invariant Markov mutation kernel $M_k$.
%
%  MAIN THEOREM
%
\begin{theorem}[Consistency of the SET algorithm] \label{thm.SOT.consistency}
  Let $\{\mu_k\}_{k=0}^K$ be a sequence of distributions that satisfies Assumption \ref{assump.SOT} and consider $\{u^N_{0,i}\}_{i=1}^N \subset \mathbb{R}^d$ such that
\begin{align*}
\mu_0^N \equiv \frac{1}{N} \sum_{i=1}^N \delta(u^N_{0,i}) \; \wc \; \mu_0.
\end{align*}
Then, for any index $1 \leq k \leq K$, the sequence of equally weighted particle approximations $\mu^N_k$ defined recursively through Equation \eqref{eq.SOT} weakly converges to
$\mu_k$ almost surely.
\end{theorem}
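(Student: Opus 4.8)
The plan is to argue by induction on $k$, peeling off the three operators in the recursion \eqref{eq.SOT} one at a time. The base case $k=0$ is exactly the hypothesis $\mu^N_0 \wc \mu_0$. For the inductive step I assume that $\mu^N_{k-1} \wc \mu_{k-1}$ almost surely and seek to conclude that $\mu^N_k = \markovop_k \circ \OTop_{\Psi_k}(\mu^N_{k-1}) \wc \mu_k$ almost surely. First I would dispose of the mutation step: since $M_k$ is a $\mu_k$-invariant Feller kernel and $\mathbb{R}^m$ is locally compact and $\sigma$-compact, the almost-sure form of Lemma \ref{lem.ergodic.markov} (applied conditionally on the randomness of the earlier steps) reduces the problem to showing that the pre-mutation empirical measures $\OTop_{\Psi_k}(\mu^N_{k-1})$ converge weakly to $\mu_k$ almost surely.

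To handle the optimal-transport step I would invoke Theorem \ref{thm.consistency} for the pair $(\mu_{k-1},\mu_k)$ with $\nu = \bayesop_{\Psi_k}(\mu_{k-1}) = \mu_k$. Its structural hypotheses, Assumptions \ref{assump.deterministic.coupling} and \ref{assump.reg}, hold by the second clause of Assumption \ref{assump.SOT}. The two remaining hypotheses are the reweighting convergence $\bayesop_{\Psi_k}(\mu^N_{k-1}) \wc \mu_k$ and the growth bound \eqref{eq.uniform.integrability}. For the former I would apply Proposition \ref{prop.consistency.reweighting}: granting that the particles remain in a fixed compact set and that $\Psi_k$ is continuous, $\Psi_k$ is bounded along the relevant region, so the controlling-function condition \eqref{eq.controlling.function} holds with a constant $V$ and Proposition \ref{prop.consistency.reweighting} gives $\bayesop_{\Psi_k}(\mu^N_{k-1}) \wc \bayesop_{\Psi_k}(\mu_{k-1}) = \mu_k$.

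The crux of the argument, and the step I expect to be the main obstacle, is establishing the uniform moment control underlying both the reweighting convergence and the growth bound \eqref{eq.uniform.integrability}. Both reduce to exhibiting a fixed ball $\bar{B}(0,R)$ that contains the supports of all $\mu_k$ together with every particle $u^N_{k,i}$, uniformly in $N$, $i$ and $k$; this is precisely where the bounded-support clause of Assumption \ref{assump.SOT} is used. I would prove this by a secondary induction over $k$: the operator $\OTop_{\Psi_k}$ replaces each particle by a convex combination of the current particles (formula \eqref{eq:otOp}), hence maps any configuration contained in the convex set $\bar{B}(0,R)$ back into $\bar{B}(0,R)$; and the mutation kernels $M_k$, being $\mu_k$-invariant with $\mu_k$ supported in $\bar{B}(0,R)$ (and of reversible random-walk Metropolis type in the implementations considered, for which proposals leaving the support are rejected), also preserve $\bar{B}(0,R)$. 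The delicate point is exactly this support-preservation of the mutation kernel, since an abstract $\mu_k$-invariant Feller kernel only fixes the support $\mu_k$-almost everywhere; I would therefore lean on the bounded support of each $\mu_k$ and the structure of the kernels used. Granting the uniform containment, $\mathbf{T}$ is essentially bounded because it pushes $\mu_{k-1}$ into the bounded support of $\mu_k$, so both terms in \eqref{eq.uniform.integrability} stay bounded uniformly in $N$ for every $p>1$.

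Finally I would assemble the pieces pathwise: on the full-probability event where $\mu^N_{k-1} \wc \mu_{k-1}$, the deterministic results (Proposition \ref{prop.consistency.reweighting} and Theorem \ref{thm.consistency}) apply and yield $\OTop_{\Psi_k}(\mu^N_{k-1}) \wc \mu_k$, after which Lemma \ref{lem.ergodic.markov} promotes this to $\mu^N_k \wc \mu_k$ almost surely. Because each mutation step injects fresh randomness, I would state the induction hypothesis as an almost-sure statement and apply the deterministic lemmas conditionally; since $K$ is finite, the finitely many almost-sure events intersect to a single event of full probability, closing the induction.
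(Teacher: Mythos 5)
Your proposal follows the same route as the paper's own proof: induction on $k$, with the inductive step decomposed into re-weighting consistency (Proposition \ref{prop.consistency.reweighting}), optimal-transport consistency (Theorem \ref{thm.consistency}), and mutation consistency (Lemma \ref{lem.ergodic.markov} applied conditionally on the earlier randomness); the only cosmetic difference is that you peel off the mutation step first rather than last.

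The substantive difference lies in how the hypotheses of Proposition \ref{prop.consistency.reweighting} and the growth bound \eqref{eq.uniform.integrability} are verified. The paper disposes of both in one sentence: boundedness of the support of $\mu_{k-1}$ yields a bounded continuous $V_k$ dominating $\Psi_k$, and Assumption \ref{assump.SOT} is asserted to imply \eqref{eq.uniform.integrability}. You correctly identify this as the crux: both conditions are statements about the \emph{particles}, not about the distributions $\mu_k$, and a dominating function $V_k \geq \Psi_k$ valid $\mu_{k-1}$-almost everywhere (equivalently, on the support of $\mu_{k-1}$) controls neither $\Psi_k(u^N_{k-1,i})$ nor $|u^N_{k-1,i}|^p$ at particle locations outside that support. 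Your secondary induction \textemdash{} the particles produced by $\OTop_{\Psi_k}$ are convex combinations of the current ones, hence remain in any ball containing them \textemdash{} is the right observation for the transport step. However, two holes remain, and you flagged only one. First, the base case fails: $\mu^N_0 \wc \mu_0$ does not confine the initial particles $u^N_{0,i}$ to any fixed ball (a single stray particle at distance $N$ from the origin, carrying weight $1/N$, is compatible with weak convergence but destroys every uniform moment bound), so the containment must either be added as a hypothesis or replaced by an asymptotic uniform-integrability argument. Second, as you note yourself, a $\mu_k$-invariant Feller kernel need not map the ball into itself; restricting to Metropolis-type kernels that reject proposals leaving the support is an assumption beyond what the theorem states. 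To be fair, neither hole is filled by the paper's proof either \textemdash{} it silently assumes exactly the particle-level control you tried to establish \textemdash{} so your write-up is faithful to the published argument and more transparent about where it is incomplete; but as it stands your containment induction does not close, and the theorem under its stated hypotheses (arbitrary Feller invariant mutations, arbitrary initial particles with $\mu^N_0 \wc \mu_0$) requires either strengthened assumptions or a genuinely different treatment of off-support particles.
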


\begin{proof}%[of Theorem \ref{thm.SOT.consistency}]
One can proceed by induction. It suffices to prove that if
$\mu^N_{k-1} \wc \mu_{k-1}$ almost surely then
$\markovop_k \circ \OTop_{\Psi_k} (\mu^N_{k-1}) \equiv \mu^N_k \wc
\mu_k$ almost surely. Under Assumption \eqref{assump.SOT}, the support of the
distribution $\mu_{k-1}$ is bounded: one can find a bounded and
continuous function $V_k$ that dominates $\Psi_k$ and invoke
Proposition \ref{prop.consistency.reweighting} to see that
$\bayesop_{\Psi_k}(\mu^N_{k-1}) \wc \mu_{k}$ almost
surely. Furthermore, under Assumption \ref{assump.SOT} the pair
$(\mu_{k-1}, \mu_k)$ satisfies Assumptions
\eqref{assump.deterministic.coupling}-\eqref{assump.reg} as well as
Equation \ref{eq.uniform.integrability}. Theorem \ref{thm.consistency} 
shows that $\OTop_{\Psi_k} (\mu^N_{k-1}) \wc \mu_{k}$ almost
surely. Finally, since the Feller Markov process $M_k$ lets $\mu_k$
invariant, Lemma \ref{lem.ergodic.markov} yields that $\markovop_k
\circ \OTop_{\Psi_k} (\mu^N_{k-1}) \wc \mu_k$ almost surely.\\
\end{proof}

As previously mentioned, one of the advantages of relying on optimal transportation instead of sampling-resampling techniques is that, as illustrated in Section \ref{sec:numerical}, the resulting algorithm is much less sensitive to the mixing properties of the Markov mutation kernels $M_k$. Moreover, the adaptive tempering strategies described in Section \ref{sec:tempmut} can be used within the SET method. In Section \ref{sec:numerical}, we compare the SET approach to more standard SMC approaches.

\subsection{Adaptive tempering}
\label{sec:tempmut}
In complex scenarios such as Bayesian inverse problems, it is a nontrivial task to specify a sequence of distributions \eqref{eq.bridge} 
that interpolates between a distribution $\mu_0$ that is straightforward to sample from and 
the posterior distribution. 
%Similarly, choosing a-priori a sequence of Markov mutation kernels is typically not feasible. 
Instead, we consider an adaptive annealing scheme \cite{duetscher2000articulated,minvielle2010bayesian,jasra2011inference,zhou2016toward,nguyen2016efficient,schafer2013sequential,Kantas2014}. The reader is referred to \cite{Beskos2015a,giraud2017nonasymptotic} for a theoretical analysis of adaptive annealing methods.
%Recall definition \eqref{eq.posterior} of the posterior distribution. 
%After a finite element discretization, all the quantities of interest are finite dimensional 
%and the distributions can be described  by their densities with respect to the Lebesgue measure in $\mathbb{R}^d$. 
For notational convenience, we identify distributions with their densities,
and assume that the posterior distribution $\mu_{\post}$ is absolutely continuous 
with respect to $\mu_0$, i.e. $d\mu_{\post} / d \mu_0(u) \propto \exp[V(u)]$ for some potential function $V: \mathbb{R}^d \to \mathbb{R}$. Consider the sequence $\{\mu_k\}_{k=0}^K$ defined as
\begin{align} \label{eq.tempered.dist}
\frac{d \mu_k}{d \mu_0}(u) \; \propto \;
\exp\sqBK{ \tau_k \, V(u) }
\end{align}
for an (inverse) temperature parameter $\tau_k$ that interpolates between $\tau_0=0$ and 
$\tau_K = 1$. 
In practice, it can be difficult to choose the number $K \geq 1$ of temperatures (i.e. the number of interpolating densities) and the corresponding temperatures. The adaptive scheme proceeds as follows. Assume that the particle approximation
\begin{align*}
\mu^N_{k} = \frac{1}{N} \, \sum_{i=1}^N \delta(u^N_{k,i})
\end{align*}
to the density $\mu_{k}$ has already been constructed. For a predetermined threshold $0 < \xi_{\ess} < 1$, the next temperature $\tau_{k+1}$ is defined as the smallest temperature $\tau > \tau_{k}$ such that $\ess_k(\tau) \leq \xi_{\ess}$ . Here, The {\it Effective Sample Size} (ESS) functional is defined as
%
%   ESS FUNCTIONAL
%
\begin{align} \label{ess}
\ess_{k}(\tau) 
&\equiv
\frac{1}{N} \, \frac{\BK{\sum_{i=1}^N \exp\sqBK{ (\tau-\tau_k) \, V(u^N_{k,i})} }^2}
{\sum_{i=1}^N \exp\sqBK{ (\tau-\tau_k) \, V(u^N_{k,i})}^2} \in [0,1].
\end{align}
Clearly, $\ess_k(\tau_k) = 1$.
Lemma $3.1$ of \cite{Beskos2015a} states that the function $\tau \mapsto \ess_k(\tau)$ is decreasing for $\tau \in (\tau_k, \infty)$ 
so that $\tau_{k+1}$ can very efficiently be found by a bisection method. Finding $\tau_{k+1}$ 
typically does not require evaluating the forward map since, in standard implementations 
of the SMC or SET methods, the quantities $V(u^N_{k,i})$ would have already been computed 
at previous steps. Starting from $\tau_0 = 0$ and setting
\begin{align} \label{eq.tau.search}
\tau_{k+1} = \inf \curBK{\tau > \tau_k \; : \; \ess_k(\tau) \leq \xi_{\ess}},
\end{align}
the procedure stops as soon as $\tau_k$ is greater or equal to one. One thus sets 
$K = \inf \curBK{k \geq 1 \; : \; \tau_k \geq 1}$ and defines $\tau_K = 1$.
Note that taking $\xi_{\ess}$ close to one leads to a slow annealing, which may be computationally wasteful. 
On the other hand, taking $\xi_{\ess}$ close to zero can lead to an annealing scheme that is too rapid, ultimately 
leading to a poor particle approximation of the posterior distribution.
Except stated otherwise, we choose $\xi_{\ess} = 1/2$ in the numerical experiments of Section \ref{sec:numerical}.

%
% ==== ADAPTIVE MARKOV KERNELS
%
\subsection{Adaptive mutation kernels}
\label{sec:markovmut}

Choosing a-priori a sequence of well-mixing Markov mutation kernels is, in most
realistic scenarios, not feasible. 
A standard approach consists in exploiting the population $\{u^N_{k,i}\}_{i=1}^N$ of particles at temperature $\tau_k$ to estimate summary statistics of
the distribution $\mu_k$. 
These summary statistics estimates (e.g. mean and covariance matrix) can then
be leveraged to design a Markov kernel $M_k$ with reasonable mixing
properties and that lets the distribution $\mu_k$ invariant.
In high-dimensional settings, this adaptive tuning of the mutation kernel is often crucial to obtaining satisfying performances. 
In this section, we concentrate on two classes of proposals, namely {\it autoregressive proposals} 
that do not make use of any derivative information and {\it Preconditioned Crank-Nicholson Langevin proposals} 
that can make use of gradient information for enhanced mixing properties. We refer the reader to \cite{CuiLawMarzouk16} and the references therein for more advanced adaptation strategies especially designed to tackle high-dimensional  Bayesian inverse problems.\\

\noindent
{\bf Autoregressive Proposals:} for a mean vector $\mathbf{m} \in \mathbb{R}^d$ and a positive definite covariance matrix $\mathbf{\Gamma} \in \mathbb{R}^{d,d}$, the Markovian proposal $u \mapsto \widehat{u}$ defined as
\begin{align} \label{eq.pcn}
\widehat{u} = \mathbf{m} + \rho \, \BK{u - \mathbf{m}} + (1-\rho^2)^{1/2} \, \mathcal{N}\BK{ 0, \mathbf{\Gamma} }
\end{align}
for some scaling factor $\rho \in (0,1)$ is reversible with respect to the Gaussian distribution with mean $\mathbf{m}$ and covariance $\mathbf{\Gamma}$. This proposal mechanism, also sometimes called the {\it Preconditioned Crank-Nicholson} proposal \cite{Cotter2013}, can consequently be used within a standard Metropolis-Hastings scheme to efficiently explore distributions that are well approximated by a Gaussian distribution with mean $\mathbf{m}$ and covariance $\mathbf{\Gamma}$. This remark can be used to design an adaptation strategy \cite{Kantas2014} for automatically tuning the mutation kernels within SMC methods or the SET algorithm. At iteration $k$, right after the resampling step of a SMC method, or right after the transportation step of the SET, consider a set $\{\widetilde{u}^N_{k,i}\}_{i=1}^N$ of particles whose (equally weighted) empirical distribution approximates the distribution $\mu_k$. In order to use an autoregressive Markov kernel \eqref{eq.pcn}, one can use the particles $\{\widetilde{u}^N_{k,i}\}_{i=1}^N$ to compute an approximation $\mathbf{m}^N_k$ of the mean of $\mu_k$ as well as an approximation $\mathbf{\Gamma}^N_k$ of its covariance matrix. In high-dimensional settings, or when the number of particles is low when compared to the dimensionality of the state-space, it is customary to only consider diagonal approximations of the covariance structure: the approximate covariance matrix $\mathbf{\Gamma}^N_k$ is diagonal, with the empirical marginal variances on its diagonal. The scaling factor $\rho^N_k$ can also be chosen adaptively. Values of $\rho^N_k \approx 1^-$ lead to conservative proposals while values of $\rho^N_k \approx 0^+$ are more likely to be rejected. Given two fixed thresholds $0<\xi_- < \xi_+ < 1$, the scaling factor $\rho^N_k$ can be adapted based upon the acceptance rate of the Metropolis-Hastings proposals \eqref{eq.pcn}. Specifically, we set 
$\rho^N_{k} = \min(1, [1+\epsilon] \, \rho^N_{k-1})$ if the proportion of accepted proposals falls below 
$\xi_-$, set $\rho^N_{k} = [1-\epsilon] \, \rho^N_{k-1}$ if the proportion of
accepted proposals is above $\xi^+$ and set $\rho^N_{k+1} = \rho^N_k$
otherwise. In other words, the scaling factor is augmented or decreased by an proportion $\epsilon \in (0,1)$ depending on the acceptance rate of the MCMC proposals. In experiments presented in Section \ref{sec:numerical}, we use $\xi_- = 20\%$ and $\xi_+ = 85\%$ and $\epsilon = 20\%$.  \\

\noindent
{\bf Preconditioned Crank-Nicholson Langevin proposals:} one potential drawback of the autoregressive proposals \eqref{eq.pcn} is that no derivative information is exploited. Instead, Markovian proposals $u \mapsto \widehat{u}$ of the type
\begin{align} \label{eq.pcnl}
\widehat{u} = u + (1-\rho) \, \Gamma \, \nabla \log \mu_k(u) + (1-\rho^2)^{1/2} \, \mathcal{N}\BK{ 0, \mathbf{\Gamma} }
\end{align}
can be used within a Metropolis-Hastings scheme for exploring the target density $\mu_k$. Here, $\mathbf{\Gamma}$ is still an approximation of the covariance matrix of $\mu_k$ and $\rho \in (0,1)$ is a scaling factor. Indeed, in the case where the target density is Gaussian, this reduces to the autoregressive proposal \eqref{eq.pcn}. Both the scaling factor $\rho$ and the covariance matrix $\mathbf{\Gamma}$ can be adapted throughout the evolution of a SMC or SET method. In the non-linear-PDE example of Section \ref{sec:BIP}, we describe how gradient/Hessian information can be leveraged to adapt the covariance structure $\Gamma$.

%
% ==== ADAPTIVE MARKOV KERNELS
%
\subsection{Adaptive number of Mutations}
\label{sec:mutation}
In challenging scenarios, it is important to apply several steps of Markovian mutation at each temperature level. Nevertheless, choosing a sensible number of mutation steps a-priori is often difficult. In this section, we present an adaptive procedure for automatically selecting the appropriate number of mutation steps, inspired by the methodology first proposed in \cite{Kantas2014}. Consider the SET approach when implemented to approximate a target distribution on the state-space $\cX \equiv \mathbb{R}^d$. Furthermore, consider $S \geq 1$ summary statistics, i.e. functions $\summary_s: \cX \to \mathbb{R}$ for $1 \leq s \leq S$.

At iteration $k \geq 0$, right after the resampling step of a SMC
method, or right after the transportation step of the SET approach, consider a set $\{\widetilde{u}^N_{k,i}\}_{i=1}^N$ of particles whose empirical distribution approximates the distribution $\mu_k$. Before applying the mutation kernel $M_k$, the summary statistics are computed, i.e. $\widetilde{s}^{N,s}_{k,i} = \summary_s(\widetilde{u}^N_{k,i})$, for $1 \leq i \leq N$ and $1 \leq s \leq S$. The mutation kernel $M_k$ is then applied to the particles until the correlation along the summary statistics has fallen under a pre-determined threshold $0 < \xi_{\text{stat}} < 1$. In other words, the particles are mutated by defining $\{\widetilde{u}^N_{k,i}[p]\}_{p=0}^{p_k}$ where 
$\widetilde{u}^N_{k,i}[p] \sim M_k\big( \widetilde{u}^N_{k,i}[p-1], d \widehat{u} \big)$, initialized at $\widetilde{u}^N_{k,i}[p=0]=\widetilde{u}^N_{k,i}$, and the number of mutation steps $p_k$ is set as the smallest index $p \geq 1$ such that
\begin{align} \label{eq.corr.adapt}
\textrm{Corr}\BK{ \{\summary_s(\widetilde{u}^N_{k,i}[0]) \}_{i=1}^N,   \{\summary_s(\widetilde{u}^N_{k,i}[p]) \}_{i=1}^N }
\; \leq \; \xi_{\textrm{stat}}
\qquad \textrm{for all} \qquad
1 \leq s \leq S
\end{align}
or when the number of iteration $p_k \geq 1$ reaches a maximum
threshold. The index $p_k$ is referred to as the adaptive {\it number of mutation steps}.
The final mutated particles can thus be described  as
\begin{align*}
u^N_{k,i} \sim M_k^{p_k}\BK{ \widetilde{u}^N_{k,i}, d \widehat{u} }.
\end{align*}
%
%The article
A similar approach has been employed in 
\cite{Kantas2014} in which the low-frequencies of a Fourier  expansion
is used
% uses a similar approach and employs
as summary statistics. In Section \ref{sec:BIP}, 
we use as summary statistics the projection of the particles along 
likelihood-informed directions and a threshold of $\xi_{\textrm{stat}} = 80\%$.

%
%  ===== FULL ALGORITHM
%
\subsection{Sequential Ensemble Transform: practical implementations}
For completeness, we now described in more details the SET methodology when used in conjunction with the adaptation strategies discussed in Sections \ref{sec:tempmut} and \ref{sec:markovmut}. As in Section \ref{sec:tempmut}, consider an initial distribution $\mu_0$ that is straightforward to sample from, and the posterior distribution $\mu_{\post}$ that can be expressed as $d \mu_{\post} / d \mu_0(u) \propto \exp[V(u)]$ for some potential $V: \mathbb{R}^d \to \mathbb{R}$. We consider tempered distributions $\mu_k$ defined as $d \mu_k / d \mu_0(u) \propto \exp[\tau_k \, V(u)]$ for a temperature parameter $\tau_k \in [0,1]$ that is found adaptively. The entire method is summarised in Algorithm \ref{algo.set}.

\begin{algorithm}[H]
\label{algo.set}
\caption{Sequential Ensemble Transform}
{\bf Inputs:} initial and final distributions $\mu_0$ and $\mu_{\post}$\\
%,  initial particles $\{u^N_{k,i}\}_{i=1}^N$.\\
{\bf Output:} particle approximation $(1/N) \sum_{i=1}^N \delta(u^N_{i,\star})$ of the distribution $\mu_{\post}$ 
Set $k=0$ and $\tau_0 = 0$ and initialize $\{u^N_{0,i}\}_{i=1}^N$ as samples from $\mu_0$.\\
\While{$\tau_k < 1$}{
Evaluate $V(u^N_{k,i})$ for $1 \leq i \leq N$.\\
Find the next temperature $\tau_{k+1}$ through Equation \eqref{eq.tau.search}\\
Define the probability weights $w^N_{k+1,i} \propto \exp[(\tau_{k+1} - \tau_k) \, V(u^N_{k,i})]$.\\
Compute the cost matrix $\costmat_{i,j} = \cost(u^N_{k,i}, u^N_{k,j})$\\
Compute $\C^{\OT,N}= \argmin \;C \mapsto \bra{C, \costmat}_{\text{F}} \in \mathbb{R}_+^{N,N}$ under the constraint\\
\begin{align*}
\sum_{j=1}^N \C^{\OT,N}(i,j) = \frac{1}{N}
\qquad \textrm{and} \qquad
\sum_{i=1}^N \C^{\OT,N}(i,j) = w^N_{k+1,j}.
\end{align*} 
Transport the particles by setting: $\widetilde{u}^N_{k+1,i} = N \, \sum_{j=1}^N \C^{\OT,N}(i,j) \, u^N_{k,j}$\\
Use $\{ \widetilde{u}^N_{k+1,i}  \}_{i=1}^N$ to tune a $\mu_{k+1}$-invariant Markov kernel $M_{k+1}(u, d \widehat{u})$. \\
%$$M_{k+1}(u, d \widehat{u})$$
%that lets the distribution $\mu_{k+1}(du)$ invariant.\\
%Mutate the particles by setting $u^N_{k+1,i} \sim M^{p_k}_{k+1}(\widetilde{u}^N_{k+1,i}, d \widehat{u})$ where the number of mutation steps $p_{k+1}$ through the criterion \eqref{eq.corr.adapt}.\\
Set $\widetilde{u}^N_{k+1,i}[0] = \widetilde{u}^N_{k+1,i}$ and $p_k=0$.\\
\While{criterion \eqref{eq.corr.adapt} not satisfied}{
Set $p_k \leftarrow p_k+1$\\
Define: $\widetilde{u}^N_{k+1,i}[p_k] \sim M_{k+1}(\widetilde{u}^N_{k+1,i}[p_k-1], d \widehat{u})$
}
Set $u^{N}_{k+1,i} = \widetilde{u}^N_{k+1,i}[p_k]$ and $k \leftarrow k+1$ .
}
\end{algorithm}

%
%  NUMERICS
%
\section{Numerical Experiments}
\label{sec:numerical}
For PDE-constrained Bayesian inverse problems, the overall cost of the
SET algorithm is dominated by PDE solves
\cite{DupontHoffmanJohnsonEtAl03}. Estimating the matrices
$\C^{\OT,N}$ requires solving an optimal transport problem: the standard
simplex method or interior point method \cite{pele2009fast} directly
applied to the linear program \eqref{eq.ot.map} scales as
$\mathcal{O}(N^3)$. Faster and approximate methods are available: for
example, the entropic relaxation of \cite{Cuturi13} computes an
$\epsilon$-approximation with the  cost of $\mathcal{O}(N^2/\epsilon^3)$ \cite{altschuler2017near}.
%We have %found in practice that,
%in Bayesian inverse problems and even
Our numerical experiments show that even
with %a number of particles
$N = \mathcal{O}(10^4)$ particles, the computational
overheads associated with solving optimal transport problems to
obtain the optimal transport matrix $\C^{\OT,N}$ is negligible when 
compared to the cost of computing the forward PDE solves. % that involve solving a set of PDEs. 
Consequently, for all the numerical simulations presented in this
section, the approximate but more scalable methods such as the ones described in
\cite{genevay2016stochastic,Cuturi13} for computing discrete optimal
transport schemes were not employed. Instead, the optimal transport matrices
were computed through a standard %network
simplex solver \cite{flamary2017pot}.
To operate, the SET method requires $\mathcal{O}\BK{ N \times M}$
PDE-solves where $M$ is the total number of Markov mutations applied
to each particle. In this section, we adopt the strategies described
in Sections \ref{sec:tempmut} and \ref{sec:markovmut} and
\ref{sec:mutation} for automatically adapting the sequence of
temperatures, the Markov mutation kernels, and the number of times
these Markov mutation kernels were applied.
We compare the SET approach to the state-of-the-art adaptive SMC 
approach of \cite{Kantas2014, Beskos2015a}.
In this section, we present three numerical experiments with
increasing complexity. The first experiment investigates the influence
of the mixing properties of the mutation kernels: for this purpose,
the adaptive schemes used for adapting the Markov kernels, temperature
ladder, and number of mutations at each temperature are switched
off. %not used.
The second experiment looks into the effect of the number of
mutations at each temperature level. Finally, the last experiment is a
relatively challenging Bayesian inverse problem. It illustrates the
robustness and efficiency of the SET method when used in conjunction
with automated adaptation strategies; to the best of our knowledge,
the scheme using the averaged Gauss-Newton Hessian for adapting the 
PCNL covariance structure is new.

\subsection{Scalar Target Distribution}
\label{sec:one:d}
In this section, we investigate the influence of the mixing properties of the Markov mutation kernels.
%on the resulting methods.
We consider a one-dimensional Gaussian target distribution $\mu(du)$ defined as 
\begin{align} \label{eq.experiemtn.scalar}
\frac{d \mu }{ d \mu_0 }(u) \propto \exp \curBK{-\frac{1}{\sigma_{\textrm{noise}}^2}\BK{u - 1/2}^2} 
\equiv \exp\sqBK{V(u)}
\end{align}
for a ``prior" distribution $\mu_0$ chosen as a centred Gaussian with
unit variance $\sigma_0=1$. In the experiments presented in this
section, we chose $\sigma_{\textrm{noise}} = 10^{-3}$: although all
the quantities are Gaussian, this setting is challenging since
$\sigma_{\textrm{noise}} \ll \sigma_0$. In order to focus on the
mixing properties of the Markov mutation kernels, we fix a sequence of
intermediate temperatures equally spaced on a logarithmic scale $\{
\tau_k \}_{k=1}^K$ with $K=30$. In other words, the adaptive tempering
scheme presented in Section \ref{sec:tempmut} is not used. Denote by
$\sigma_k$ the standard deviation of the Gaussian intermediate
distribution $\mu_k$ defined as $d \mu_k / d \mu_0(u) \propto
\exp[\tau_k \, V(u)]$. At temperature $\tau_k > 0$, the Markov
mutation kernel is chosen as a Random Walk Metropolis (RWM) kernel
with Gaussian perturbations with standard deviation $\rho \,
\sigma_k$, where $\rho > 0$ is used to control the mixing properties of the mutation kernels. For $\rho \ll 1$, the mutation kernels are inefficient while for $\rho \approx 1$ the mutation kernels are close to optimal.\\

%
%  === one dimensional examples
%
\begin{figure}[h!t!b!]
\begin{center}
\subfigure{
\includegraphics[width=1\textwidth]{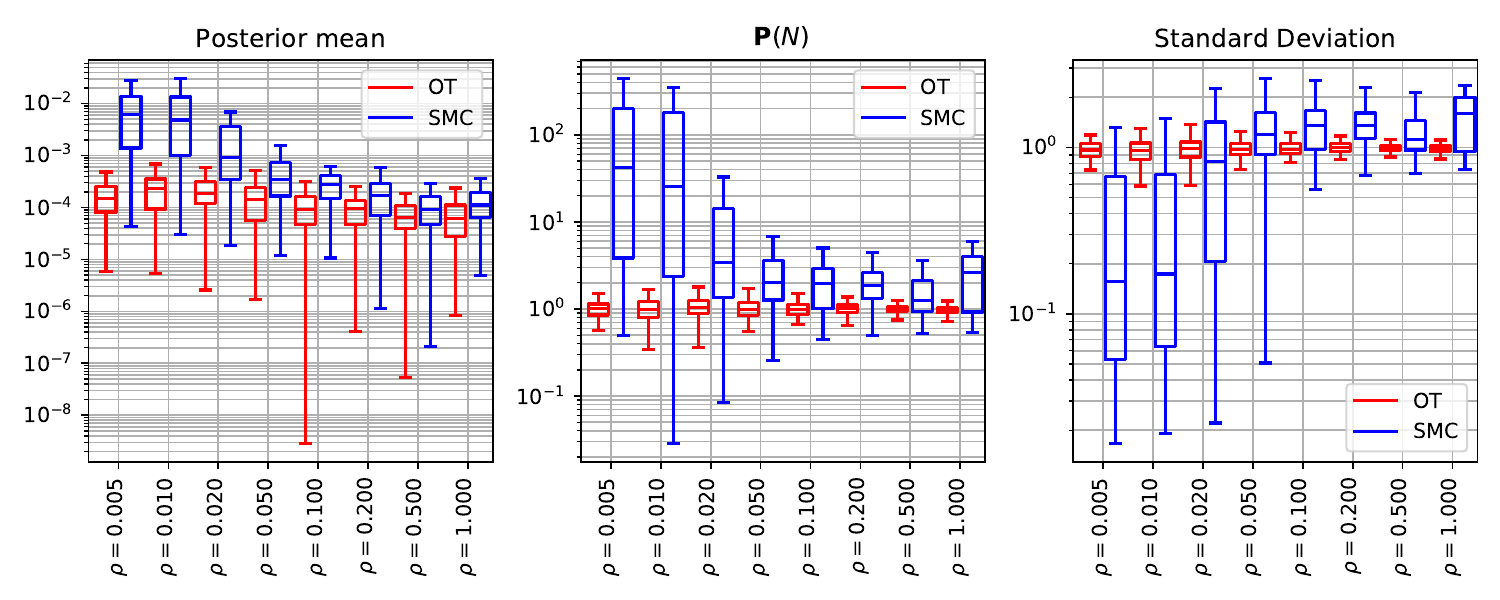}
}
\end{center}
\caption{Target distribution \eqref{eq.experiemtn.scalar} with
  $N=10^2$ particles, no adaptation and a ladder of $K=30$ temperatures 
  equally spaced on a logarithmic scale. Each experiment is executed
  and averaged over $n=100$ times. The scaling parameter $\rho>0$ quantifies the quality of the Markovian mutations.
{\bf Left:} distribution of $|\widehat{m}_{\post}^N - m_\post|$ 
{\bf Middle:} distribution of the quantity $\mathbf{P}(N)$ defined in \eqref{eq.nll} 
{\bf Right:} distribution of the ratio $\widehat{\sigma}_{\post}^N / \sigma_\post$}
\label{fig:one_dim}
\end{figure}

Set $m_{\post}$ and $\sigma_\post$ the mean and standard deviation of
the target distribution $\mu(du)$. For a particle approximation $\mu^N
= (1/N) \sum_{i=1}^N \delta(u^N_i)$, we take $\widehat{m}_{\post}^N$
and $\widehat{\sigma}_{\post}^N$ as its mean and standard
deviation. We also consider the quantity $\mathbf{P}(N)$ that equals,
up to irrelevant additive and multiplicative constants, the negative log-posterior, 
\begin{align} \label{eq.nll}
\mathbf{P}(N) \equiv 
\frac{1}{N}\, \sum_{i=1}^N \frac{ (u^N_i - m_{\post})^2 }{ \sigma^2_\post}.
\end{align}
In this Gaussian setting and in the idealized situation when the
samples $\{u^N_i\}_{i=1}^N$ are i.i.d samples from $\mu_\post$ and $N
\to \infty$, the quantity \eqref{eq.nll} converges to one. Figure
\ref{fig:one_dim} reports the quality of the approximation of the
mean, standard deviation, and the quantity \eqref{eq.nll}
when the SET and SMC methods are employed with $N = 10^2$ particles
and identical conditions. For each value of $\rho$, the same
experiment is executed $n=100$ times. For quantifying the quality of
the approximation of the posterior mean, the absolute difference
$|\widehat{m}_{\post}^N - m_\post|$ is reported. For quantifying the
approximation of the standard deviation, the ratio
$\widehat{\sigma}_{\post}^N / \sigma_\post$ is reported. Finally, 
the quantity \eqref{eq.nll} is also reported:
values closer to one indicate a better calibrated approximation. In
this setting, the SET approach outperforms the SMC method over all the metrics. Furthermore and as expected, as $\rho \to 0$, i.e. as the mixing of the mutation kernels gets worse, the efficiency of the SMC approach degrades. 
Although the theoretical results described in Section \ref{sec.consistency} does not explain this phenomenon, 
the SET method appears to continue to perform well in the regime $\rho \to 0$ in that example.

\subsection{Multivariate Gaussian Target Distribution}
\label{sec.toy}
%
%  MEAN
%
\begin{figure}[h!t!b!]
\begin{center}
\subfigure{
\includegraphics[width=1\textwidth]{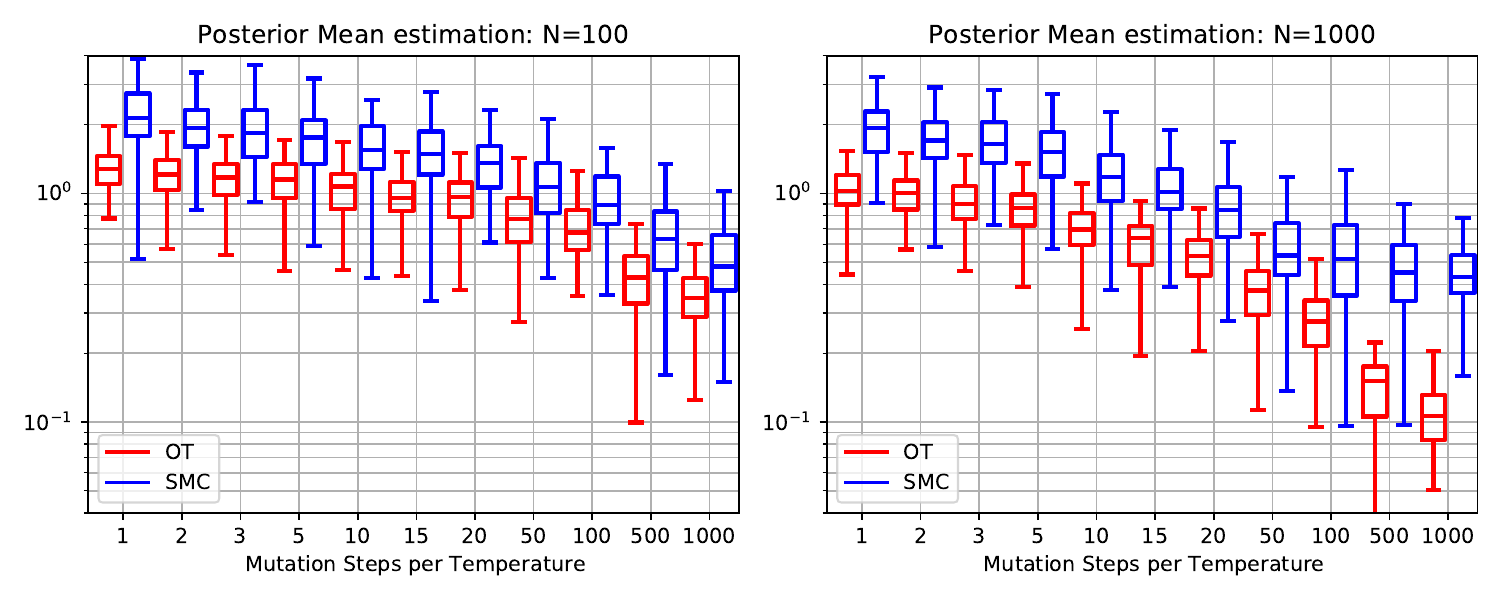}
}
\end{center}
\caption{Estimation of the posterior mean of distribution
\eqref{eq.numerical.mult.gauss} with $N=10^2$ {\bf (left)} and
$N=10^3$ {\bf(right)} particles. The error 
$\|\widehat{m}_{\post}^N- m_\post\|$ is plotted against the number of mutation 
steps $p \geq 1$  at each temperature level. Each experiment is averaged over $n=50$ runs.}
\label{fig:nondiagdim20mean}
\end{figure}

%
%  STD
%
\begin{figure}[h!t!b!]
\begin{center}
\subfigure{
\includegraphics[width=1\textwidth]{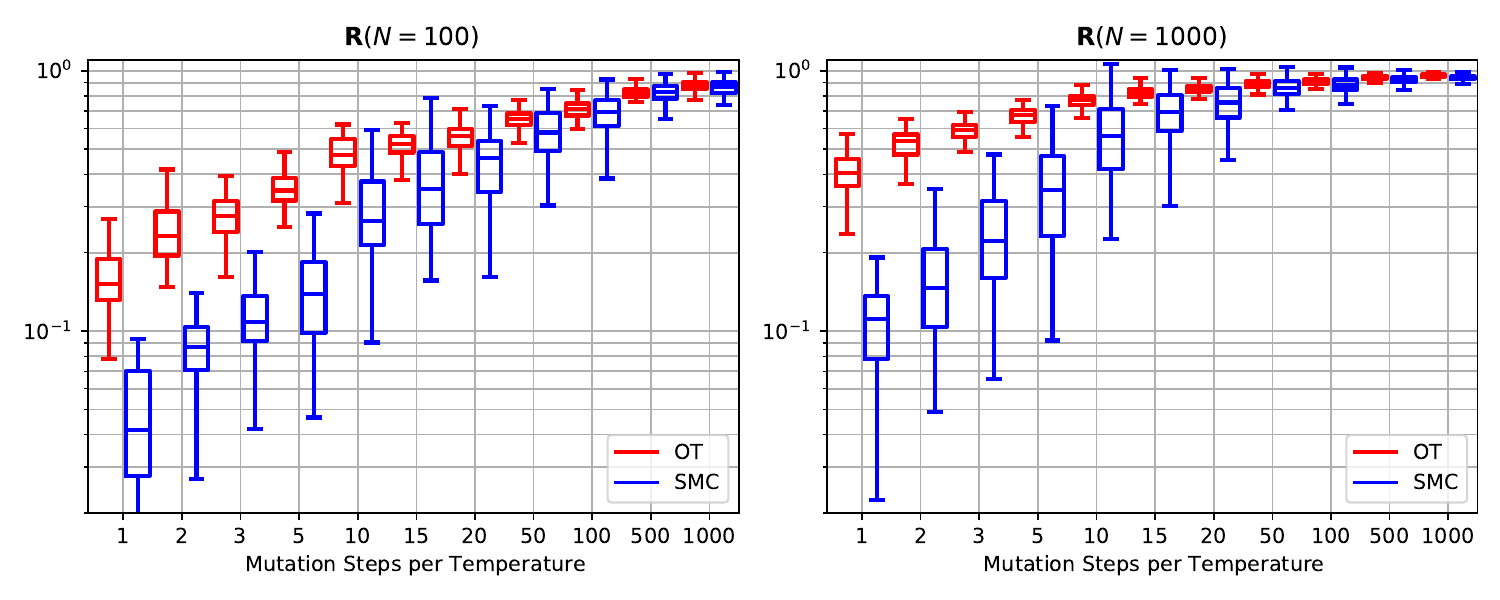}
}
\end{center}
\caption{Estimation of the posterior standard deviation of distribution \eqref{eq.numerical.mult.gauss} with $N=10^2$ {\bf (left)} and $N=10^3$ {\bf(right)} particles. The quantity $\mathbf{R}(N)$ defined in Equation \eqref{eq.error.variance} is  plotted against the number $p \geq 1$ of mutation steps at each temperature level. Each experiment is repeated $n=50$ times.}
\label{fig:nondiagdim20std}
\end{figure}

In this section, we study the influence of the number of mutation
steps at each temperature in a more challenging scenario. As opposed to Section \ref{sec:one:d}, we employ the adaptive tempering scheme described in Section \ref{sec:tempmut}.
Let $\mu_0$ be a centered Gaussian distribution in $\mathbb{R}^D$ with identity covariance matrix.
The Gaussian target distribution $\mu$ is defined through the change
of probability measure
\begin{align} \label{eq.numerical.mult.gauss}
\frac{d \mu}{d \mu_0}(u) \; \propto \;
\exp \curBK{-\frac12 \, \bra{u, \Gamma^{-1} u} }.
\end{align}
The covariance matrix $\Gamma \in \mathbb{R}^{D,D}$ is given by
\begin{align*}
\Gamma_{i,j} = \sigma^2 \, \exp \curBK{- \frac{(j-i)^2}{2 \, \ell^2}}
\end{align*}
for a variance parameter $\sigma^2>0$ and length-scale parameter $\ell > 0$. In the
numerical experiments of this section, we chose $\sigma=1$ and $\ell = 4$ and $D=20$.  
Although the target distribution is Gaussian, it is a challenging scenario since it is already 
relatively high-dimensional ($D=20$) and the covariance matrix of the posterior distribution 
is highly ill-conditioned (and hence the posterior probability mass is
concentrated in low dimensional spaces dictated by dominant
eigenvectors of the covariance matrix).
The SET and SMC methods have been implemented with a number of particles 
$N \in \{10^2, 10^3\}$ and an effective sample size 
threshold \eqref{eq.tau.search} is set to $\xi_\ess = 1/2$. Furthermore, we used 
autoregressive MCMC proposals as defined in Equation \ref{eq.pcn} with mean 
and covariance structure empirically estimated from the population of
particles. In particular,
the covariance matrix of the autoregressive proposals is assumed to be diagonal 
with empirical marginal variances on the diagonal (see \cite{Kantas2014} for a
similar approach in the SMC context). As described in Section \ref{sec:tempmut}
the scaling parameter $\rho > 0$ was chosen adaptively to maintain MCMC mutations 
with acceptance rates in between the thresholds $\xi_{-}=20\%$ and $\xi_{+}=80\%$.

%
%  PAIR PLOTS
%
\begin{figure}[h!t!b!]
\begin{center}
\subfigure{
\includegraphics[width=0.33\textwidth]{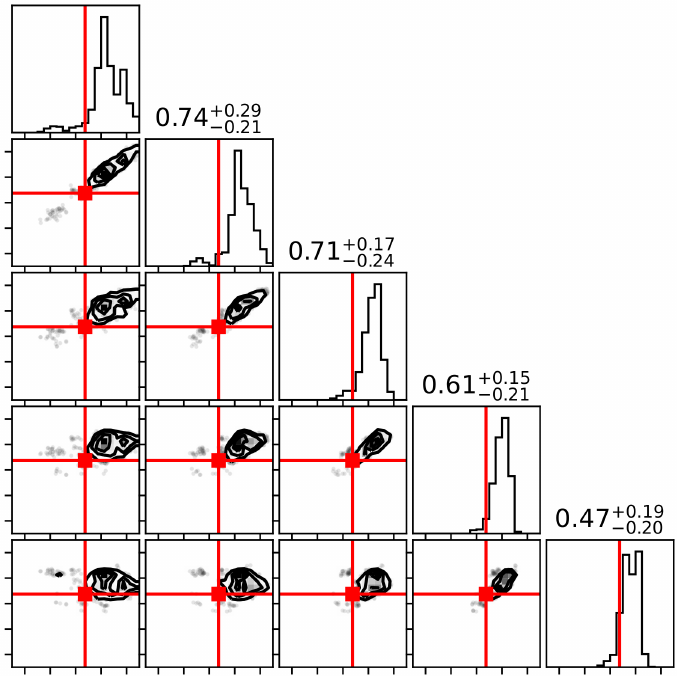}
\includegraphics[width=0.33\textwidth]{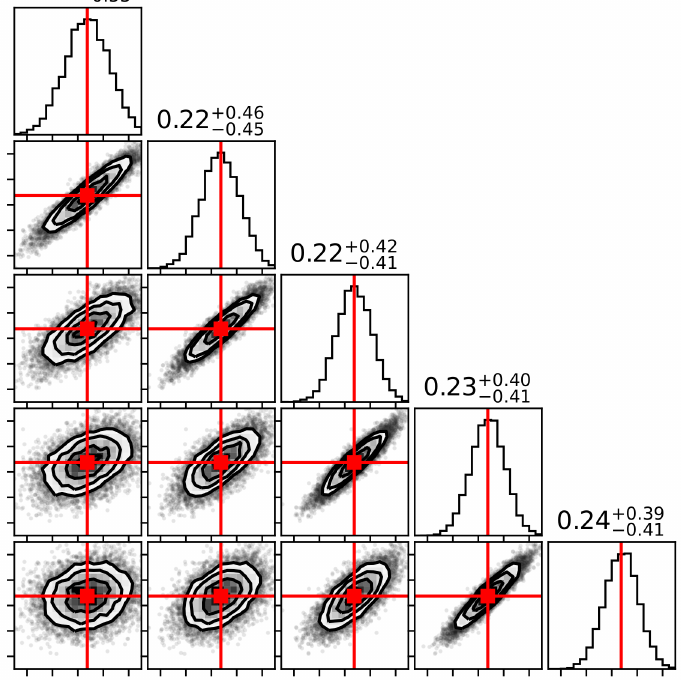}
\includegraphics[width=0.33\textwidth]{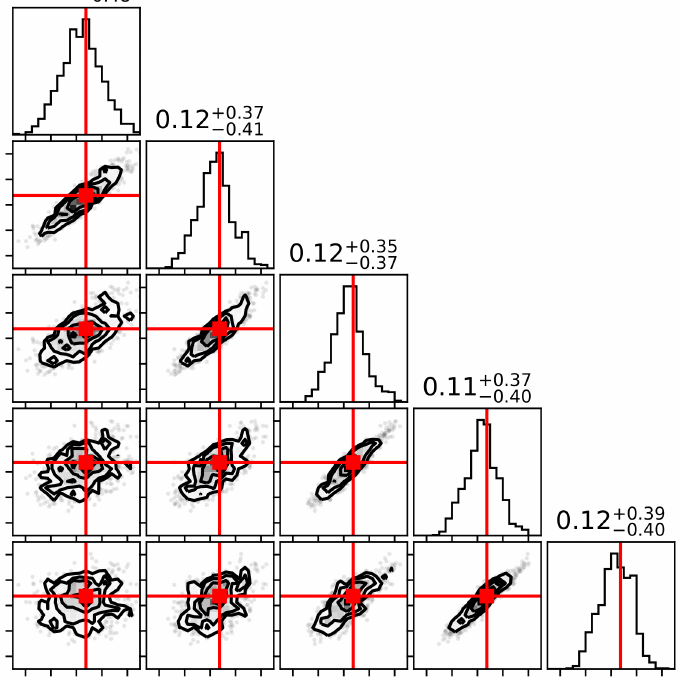}
}
\end{center}
\caption{Density plots of the four first coordinates of the $20$-dimensional target distribution \eqref{eq.numerical.mult.gauss} for SMC {\bf (left)} and SET {\bf (right)}  with $N=10^4$ particles 
and $p=20$ mutations at each temperature. {\bf Middle} is the density plot of $N=10^4$ independent samples from the target distribution.}
\label{fig:temp_pair}
\end{figure}

We compared the performance of the SET and SMC methods when used with 
a fixed number $p \geq 1$ of mutation steps at each temperature level. 
Experiments were carried out for a number of mutations as low as $p=1$ and 
as high as $p=10^3$.
As in Section \ref{sec:one:d}, 
we report the quality of the posterior mean and posterior standard deviation. 
In Figure \ref{fig:nondiagdim20mean}, for each value of the number of mutation steps $p \geq 1$, 
%experiments were 
%ran
% $n=50$ times and the quantity
$\|\widehat{m}^N_{\post} - m_\post\|$ is averaged over  $50$ runs, %and  reported 
where $\widehat{m}_\post^N \in \mathbb{R}^D$ denotes the posterior mean estimated 
from the population of particles. Similarly, 
in Figure \ref{fig:nondiagdim20std}
we report the averaged ratio $\mathbf{R}(N)$ between the estimated standard 
deviations and the theoretical ones,
\begin{align} \label{eq.error.variance}
\mathbf{R}(N)
\equiv \frac{1}{D} \sum_{d=1}^D
\frac{\widehat{\sigma}^N_{\post, d}}{\sigma_{\post,d}},
\end{align}
where $\sigma_{\post,d}$ is the marginal standard deviation in the
$d$-th dimension %along the dimension $1 \leq d \leq D$ 
and $\widehat{\sigma}^N_{\post, d}$ is its estimate obtained from the population of particles.
Similarly to Section \ref{sec:one:d}, we observe that the SET method 
appears to be more robust when the number of mutations $p$ is very low. 
As $p$ increases, the difference between the two methods progressively 
disappears and $\mathbf{R}(N) \to 1$ for both methods.

\begin{figure}[h!t!b!]
\begin{center}
\subfigure{
\includegraphics[width=1\textwidth]{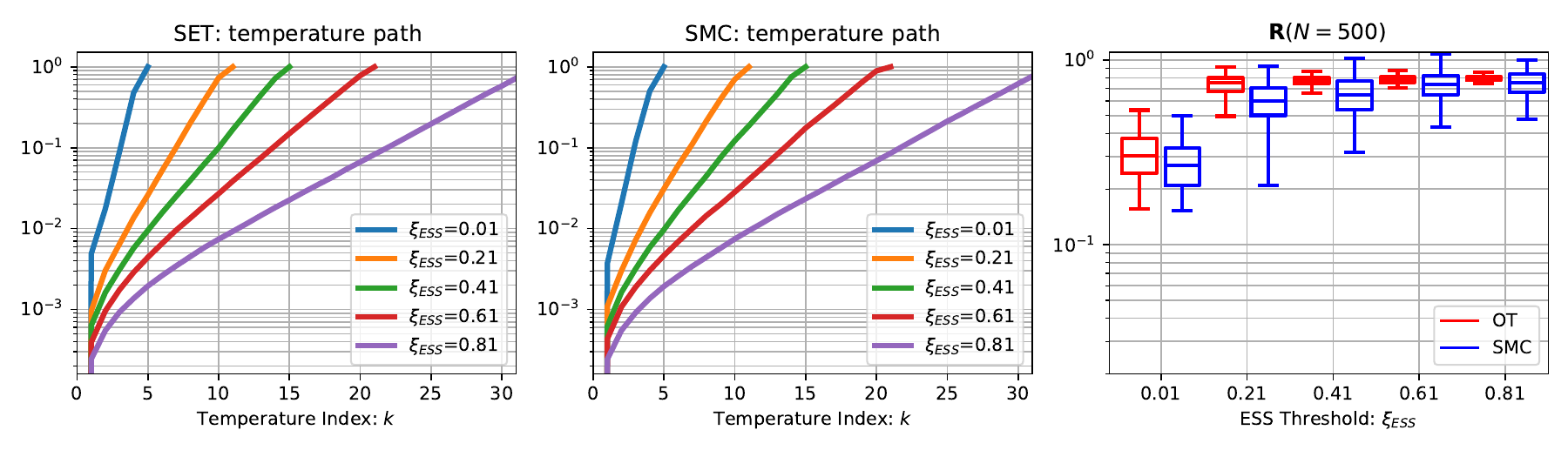}
}
\end{center}
\caption{Temperature trajectories %when building a particle
  % approximation of
  for 
  the target distribution \eqref{eq.numerical.mult.gauss} using the
  SET {\bf (left)} and the SMC {\bf (center)} methods
  % are used
  with $N=500$ particles and $p=20$ mutations at each temperature: each trajectory corresponds to a different effective sample size threshold $\xi_\ess \in \{ 1\%, 21\%, 41\%, 61\%, 81\% \}$. The plot on the {\bf (right)} displays the the $\mathbf{R}(N)$ statistics, which quantifies the approximation of the posterior mean, for each value of $\xi_{\ess}$.}
\label{fig:temp_path}
\end{figure}

Figure \ref{fig:temp_pair} shows the result %of simulations
with $N=10^4$ particles 
and $p=20$ mutations for each temperature. The marginal pairwise  distribution of the first 
four dimensions are displayed: although for $p=20$ neither SMC nor SET 
produces an entirely satisfactory approximation of the target distribution, 
it is qualitatively visible that the SET produces an approximation that is closer to the 
correct distribution.

Finally, in order to gain some understanding of the influence of the effective sample size 
threshold $\xi_\ess$ on the sequence of temperatures, as well as to study the sequence 
of temperatures adaptively chosen by the SMC and SET methods, Figure \ref{fig:temp_path} 
displays the temperature paths for the SMC and SET methods. 
As expected, larger values of the effective sample size threshold $\xi_\ess$ lead to a slower increase 
of the (inverse) temperature parameter. 
Furthermore, low values of the effective sample size threshold results in a loss of accuracy. 
This phenomenon is well understood for SMC methods since lowering $\xi_\ess$ exacerbates 
particle degeneracy %. The SET method also  with %when very
%low values of the 
%effective sample size threshold,
but more investigations are required for SET to understand in more details 
the mechanisms. %  that are at play.
Figure \ref{fig:temp_path} also shows that, except at the very start of 
the algorithm, the (inverse) temperature increases roughly linearly on a logarithmic scale. 
Furthermore, when the number of mutations  per temperature is fixed, as is done in this 
example, the SET and SMC temperature trajectories are very close to each other. 
This remark is important since it ensures that the numerical results presented in 
Figures \ref{fig:nondiagdim20mean} and \ref{fig:nondiagdim20std} are
fair, that is, for each number $p \geq 1$ mutations per temperature, the computational budgets used by the SMC
 and SET methods 
are equivalent.

\subsection{Bayesian Inverse Problem}
\label{sec:BIP}
In this section, we test the SET method for inference in a Bayesian inverse problem 
governed by a Partial Differential Equation (PDE). More specifically, we consider 
the following Poisson PDE on the unit disk $\Omega \subset \mathbb{R}^2$,
\begin{align} \label{eq.elliptic.pde}
\nabla \cdot \BK{e^z \, \nabla f}(x) = h(x)
\qquad \textrm{for} \qquad x \in \Omega,
\end{align}
for a known source term $h:\Omega \to \mathbb{R}$ , Dirichlet boundary 
conditions $f(x) = 0$ for $x \in \partial \Omega$ and a temperature field $f:\Omega \to \mathbb{R}$. 
We are interested in reconstructing $z:\Omega \to \mathbb{R}$,
the log-conductivity field, from %a set of $K=6$ discrete 
noisy observations collected at locations $x_1, \ldots, x_K \in \Omega$ modelled 
as $d_i = f(x_i) + \eta_i \in \mathbb{R}$ with independent Gaussian random 
noises $\eta_1, \ldots, \eta_K$ centered at $0$ and with variance $\sigma_{\noise}^2$. 
We assume a Gaussian prior distribution $\mu^{z}_\prior$ on the log-conductivity 
field $z:\Omega \to \mathbb{R}$ with Matern covariance structure \cite{LindgrenRueLindstroem11}. 
Draws from this prior distribution can be generated by solving the elliptic PDE
\begin{align} \label{eq.pde.prior}
\BK{\kappa^2 - \Delta} z(x) = w(x) \qquad \textrm{for} \qquad x \in \Omega
\end{align}
with vanishing Dirichlet boundary conditions. The right-hand-side of Equation \eqref{eq.pde.prior} 
is the realization $w: \Omega \to \mathbb{R}$ of a 
Gaussian white noise process on the domain $\Omega$. 
For the numerical experiments, the scale parameter $\kappa$ was set to $ 10^{-1}$. The posterior 
distribution $\mu^{z}_\post(dz)$ on the log-conductivity fields reads
\begin{align} \label{eq.posterior.pde}
\frac{d \mu^{z}_\post}{d \mu^{z}_\prior}(z) \; \propto \; 
\exp \curBK{-\frac{1}{2 \, \sigma^2_\noise} \sum_{i=1}^K \BK{d_i - \forward(z)[x_i]}^2 }
\end{align}
where $\forward$ is the parameter-to-observable map %forward operator
that associates to each log-conductivity field $z$ 
the corresponding temperature field $f = \forward(z)$ obtained by solving the PDE \eqref{eq.elliptic.pde}. 
In our experiments, we assume that the standard deviation of the additive Gaussian 
noise is known, i.e. $\sigma_\noise = 10^{-2}$. 
The location of the observations, the ground truth temperature 
and log-conductivity fields are depicted in Figure
\ref{fig:IP_observations}. Here, the ground truth 
log-conductivity field was obtained as a draw from the prior distribution \eqref{eq.pde.prior}.

\begin{figure}[h!t!b!]
\centering
\subfigure{
\includegraphics[width=1\textwidth]{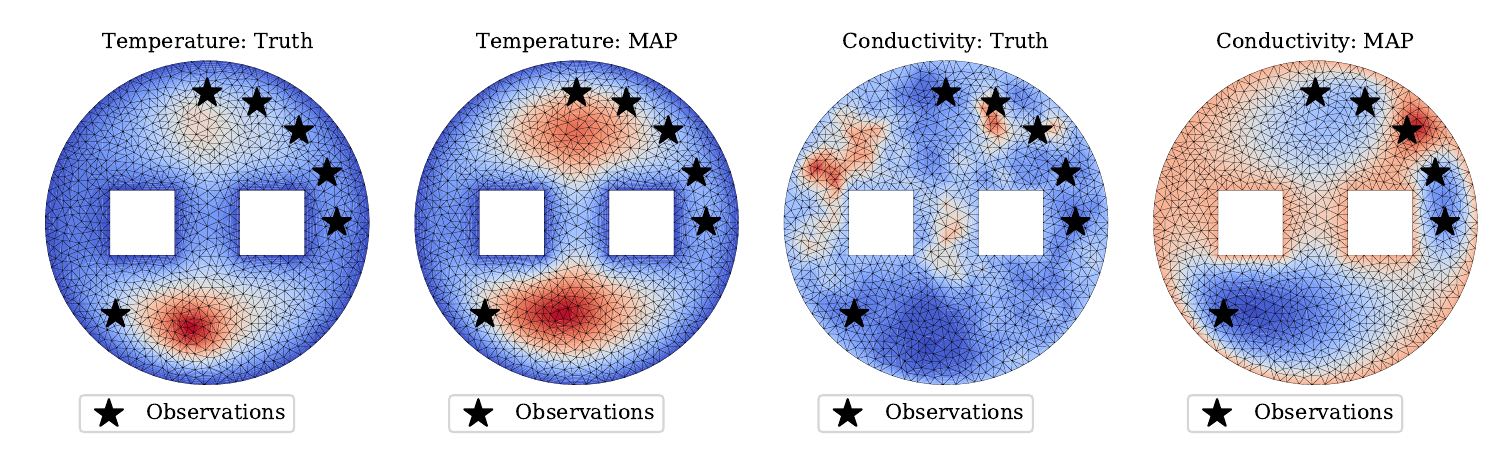}
}
\caption{Ground truth and MAP estimates of the temperature field $f:\Omega \to \mathbb{R}$ and log-conductivity field $z:\Omega \to \mathbb{R}$ in the inverse problem \eqref{eq.elliptic.pde}. The prior distribution on the log-conductivity field is a Gaussian field with vanishing mean and Mattern covariance structure \eqref{eq.pde.prior}.}
\label{fig:IP_observations}
\end{figure}
\subsubsection{Discretization and parametrization}
The PDE \eqref{eq.elliptic.pde} was discretized with the Finite Element Method (FEM)
implemented on a mesh $\mesh$ with $M=1170$ nodes, as shown in Figure \ref{fig:IP_observations}, 
using {\it FEniCS} \cite{DupontHoffmanJohnsonEtAl03}. In the remainder of this section, we 
consequently approximated all functions defined on the domain $\Omega$ with their projection 
on the finite element function space with piecewise linear basis functions $e_i:\Omega \to \mathbb{R}$ 
for $1 \leq i \leq M$. In other words, the function space (infinite dimensional) Bayesian posterior described in Equation \eqref{eq.posterior.pde} is approximated by a $M$-dimensional posterior distribution. 
To avoid the notational 
burden, we use the same notations to refer to the original (infinite
dimensional) quantities and their FEM %finite dimensional 
approximations. % obtained through the FEM.
% implemented on the mesh $\mesh$. 
Similarly, the notation $\forward$ refers to both the original forward operator and its discretized version.
If $\mass$ denotes the mass matrix associated to the FEM basis $\{e_i\}_{i=1}^M$, the discretization of a Gaussian white noise on $\Omega$ can be realized as 
$w(x) = w_1 \, e_1(x) + \ldots + w_M \, e_M(x)$ where $\ww = (w_1, \ldots, w_M) \in \mathbb{R}^M$ is a realization of a centered Gaussian random variable with covariance matrix $\mass^{-1}$. We used a (sparse) 
Cholesky decomposition $\mass = \mathbf{L} \mathbf{L}^\top$ and expressed the white noise vector 
as solution of the linear system $L^\top \, \ww = \uu$ where $\uu \in \mathbb{R}^M$ is the realization of a centered standard Gaussian distribution with identity covariance matrix. For convenience, denote by $\Phi: \mathbb{R}^M \to \mathbb{R}^M$ the (linear) 
operator that maps $\uu$ to the corresponding log-conductivity field. In other words, the function $z(x) = z_1 \, e_1(x) + \ldots + z_M \, e_M(x)$, with $(z_1, \ldots, z_M) = \zz = \Phi(\uu) \in \mathbb{R}^M$, is the solution obtained through the FEM of the PDE \eqref{eq.pde.prior} with right-hand-side represented by $\ww = (L^\top)^{-1} \, \uu$. When implementing the SET and SMC methods, the log-conductivity field is parametrized through the quantity $\uu$. This parametrization has the advantage of corresponding to a standard isotropic Gaussian prior distribution with identity covariance matrix and a posterior distribution that is close to a standard Gaussian distribution except along a few data-informed directions \cite{Flath2011,Bui-ThanhBursteddeGhattasEtAl12,Bui-Thanh2013,CuiLawMarzouk16}. These properties lead to Markov mutation kernels that are easier to tune and automatically adapted. In this parametrization, the $\mathbb{R}^M$-valued posterior density $\mu^{\uu}_\post(\uu)$ reads
\begin{align} \label{eq.posterior.pde.discrete}
 \mu_\post^{\uu}(\uu) \propto \exp\curBK{-\frac{1}{2} \|\uu\|^2 -\frac{1}{2 \, \sigma^2_\noise} \sum_{i=1}^K \BK{d_i - \forward(z)[x_i]}^2} 
\propto \mu^{\uu}_0(\uu) \, \exp\sqBK{V(\uu)}
\end{align}
where $\mu^{\uu}_0$ is the density of a centered standard isotropic Gaussian distribution in $\mathbb{R}^M$ and $V(\uu) = -(1/2)\sigma^{-2}_{\noise} \sum_{i=1}^K \BK{d_i - \forward(z)[x_i]}^2$ is the negative of the {\it data-misfit} functional.\\

For implementing the SET method, a cost matrix is needed. In order to take into account the geometry of the problem, when the SET method is implemented with $N$ particles $\{\uu^N_i\}_{i=1}^N$, the costs matrix $\costmat \in \mathbb{R}^{N,N}_+$ is defined as follows. The entry $\costmat_{i,j}$ is set to the squared $L^2(\Omega)$ distance between the log-conductivity fields associated to the particles $\uu_i$ and $\uu_j$,
\begin{align} \label{eq.dist.pde}
\costmat_{i,j} = \bra{\uu^N_j, \mass \, \uu^N_j}.
\end{align}
\subsubsection{Adaptive scheme}
\label{subsec:adaptive.scheme.BIP}

\begin{figure}[h!t!b!] \label{fig.directions.BIP}
\centering
\subfigure{
\includegraphics[width=1\textwidth]{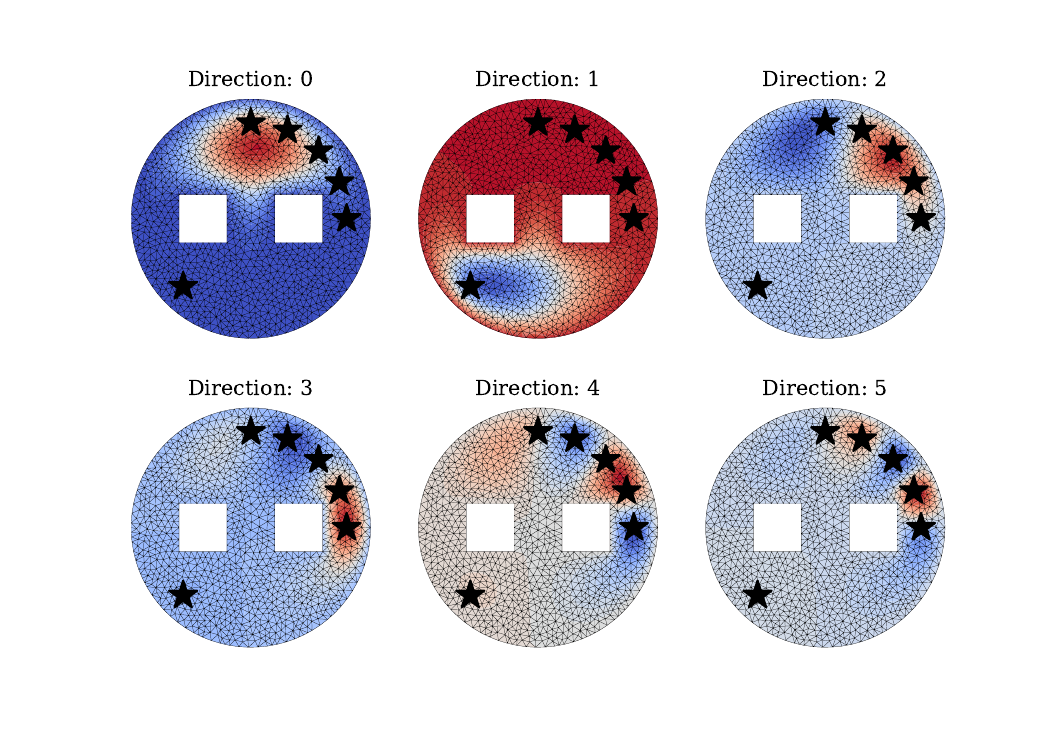}
}
\caption{Directions $\{\Phi(\vv_k)\}_{k=1}^K$ associated to the $K=6$ dominating eigenvectors $\{\vv_k\}_{k=1}^K$ of the Gauss-Newton Hessian $\hess_{\textrm{GN}}(\uu_{\textrm{MAP}})$ defined in Equation \eqref{eq.GN.hess}.}
\end{figure}

To automate the choice of the number of mutation steps at each
temperature, the adaptive scheme presented in Section
\ref{sec:mutation} is used. In order to obtain meaningful summary
statistics, we considered a data-driven approach. First, the {\em
  maximum a posterior} (MAP) estimate $\uu_{\textrm{MAP}}$ was
obtained by minimizing the negative log-posterior density $\uu \mapsto
- \log \mu^\uu_{\post}(\uu)$: gradients were computed with the adjoint
method and a standard L-BFGS minimization procedure was used. Figure
\ref{fig:IP_observations} displays the MAP estimate as well as the ground
truth. Then, we formed a Gauss-Newton approximation\footnote{Note that
the Gauss-Newton approximation includes a term corresponding to the Gaussian prior.} $\hess_{\textrm{GN}}(\uu_{\textrm{MAP}})$ of the Hessian to $-\log \mu^\uu_{\post}$ at the MAP,
\begin{align} \label{eq.GN.hess}
\hess_{\textrm{GN}}(\uu_{\textrm{MAP}}) = \mathbf{I} + \frac{1}{\sigma^2_\noise} \sum_{i=1}^K 
\BK{ \nabla_\uu \forward(z)[x_i]} \BK{ \nabla_\uu \forward(z)[x_i]}^\top \in \mathbb{R}^{M,M}.
\end{align}
On the right-hand-side of \eqref{eq.GN.hess}, all the gradient terms are evaluated at $\uu = \uu_{\textrm{MAP}}$.
The eigenvectors $\vv_1, \ldots, \vv_K \in \mathbb{R}^M$ corresponding
to the $K$ dominating eigenvalues of the Gauss-Newton Hessian
$\hess_{\textrm{GN}}(\uu_{\textrm{MAP}})$ span the directions along
which the collected data are the most informative and along
which the posterior distribution differs most from the prior distribution 
\cite{BashirWillcoxGhattasEtAl08,Bui-Thanh2012}.
%,AlexanderianPetraStadlerEtAl14,Spantini2014,chen2019hessian}.
Finally, we chose $S=K$ summary statistics defined as $\summary_k(\uu)
\equiv \bra{\vv_k, u}$. Figure \ref{fig.directions.BIP} shows the $K =
6$ directions $\Phi(\vv_k)$ for $1 \leq k \leq K$.

We used Preconditioned Crank-Nicholson Langevin (PCNL) proposals, as
described in Section \ref{sec:markovmut}, for mutating the particles:
the scaling parameter $\rho \in (0,1)$ was adapted so as to maintain
an acceptance probability in between $\xi_- = 20\%$ and $\xi_+ =
80\%$. The structure of the covariance  $\Gamma$ of the PCNL proposals
was also chosen adaptively. When exploring the density $\mu_k(\uu)
\propto \mu^{\uu}_0(\uu) \, \exp[\tau_k \, V(\uu)]$ at temperature $\tau_k$, the particle system $\{\uu^N_i\}_{i=1}^N$ was used to approximate the averaged Gauss-Newton Hessian $(1/N) \sum_{i=1}^N \hess^{\tau_k}_{\textrm{GN}}(\uu^N_i)$ where
\begin{align} \label{eq.avg.hessian}
\hess^{\tau_k}_{\textrm{GN}}(\uu) = 
\mathbf{I} + \frac{\tau_k}{\sigma^2_\noise} \sum_{i=1}^K 
\BK{ \nabla_\uu \forward(z)[x_i]} \BK{ \nabla_\uu \forward(z)[x_i]}^\top \in \mathbb{R}^{M,M}.
\end{align}
In the setting when the prior distribution is Gaussian and the forward map is linear, 
the posterior distribution is also Gaussian and the averaged
Gauss-Newton Hessian equals the precision of this Gaussian
posterior distribution. This motivates the use of the averaged Gauss-Newton
Hessian\textemdash which is guaranteed to be positive definite\textemdash as the inverse covariance structure for the noise used within the PCNL proposals. To summarize, at temperature $\tau_k$ and right after the transportation step when using the SET approach, or right after the resampling step when using SMC, the covariance $\widehat{\Gamma}_k$ used within the PCNL proposals was defined as
\begin{align} \label{eq.adapt.PCNL}
\widehat{\Gamma}_k = \curBK{\frac{1}{N} \, \sum_{i=1}^N \, \hess^{\tau_k}_{\textrm{GN}}(\uu^N_i) }^{-1}
\end{align}
where $\{\uu^N_i\}_{i=1}^N$, again, denotes the current particle population.

\subsubsection{Results}
\label{subsec:results.BIP}

\begin{figure}[h!t!b!]
\begin{center}
\subfigure{
\includegraphics[width=1\textwidth]{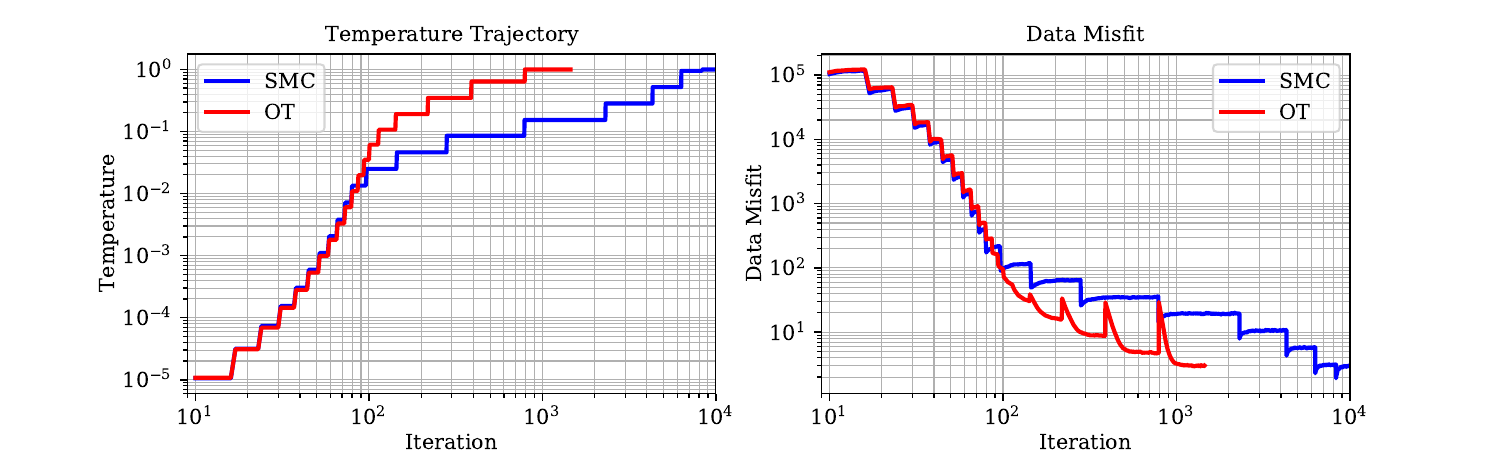}
}
\end{center}
\caption{Bayesian Inverse Problem \eqref{eq.elliptic.pde}:
  Trajectories for Temperature {\bf (left)} and averaged data-misfit functional {\bf (right)} using the SET and SMC methods with $N=2.10^3$ particles, adaptive PCNL mutation kernels, an adaptive temperature scheme, and an adaptive number of mutation steps at each temperature.}
\label{fig:IP_temp_traj}
\end{figure}

We implemented the SET and SMC approaches with $N=2.10^3$ particles
with identical conditions on a server with
$20$ computing cores, one for each particle to be computed in parallel. 
The initial distribution was chosen as the prior, i.e. $\mu^\uu_0(\uu) \propto \exp\sqBK{-(1/2)\|\uu\|^2}$.
Furthermore, the schemes presented in Sections  \ref{sec:tempmut}, \ref{sec:markovmut} and \ref{sec:mutation} were used to automatically adapt the ladder of temperatures, the Markov mutation kernels, and the number of  mutations.% kernels were applied.

The ground truth was obtained by running $20$ Preconditioned
Crank-Nicholson Langevin MCMC simulations (in parallel) %using the
% same $\uu$-parametrizations
for $L=10^7$ iterations, each of the runs was initialized from independent draws using the Gauss-Newton approximation described in Section \ref{subsec:adaptive.scheme.BIP}. Convergence was checked by verifying that the marginal means, variances, and summary statistics described in Section \ref{subsec:adaptive.scheme.BIP} agreed among the $20$ chains.

Figure \ref{fig:IP_temp_traj} (left) shows the trajectories of the
temperatures for both the SET and SMC methods as a function of the
total number of mutation steps. In this example, the Markov mutation
steps are more computationally expensive, by at least an order of
magnitude, than all the other computational overheads. Consequently,
the total number of mutation steps is roughly proportional to the
wall-clock compute time. Note that the SET method requires almost an
order magnitude less iterations to converge
%although we have not been able to entirely explain this phenomenon,
%we believe that it is because the SET method is able to more
%efficiently adapt the Markov mutation kernels through the adaptation
%strategy \eqref{eq.adapt.PCNL}
since, as numerically observed in Sections \ref{sec:one:d} and
\ref{sec.toy},  it is less affected by particle
degeneracy. Consequently, it is able to more
efficiently adapt the Markov mutation kernels through the adaptation
strategy \eqref{eq.adapt.PCNL}. Figure \ref{fig:IP_temp_traj} (right)
displays the averaged value of the data-misfit functional $(-1/N)
\sum_{j=1}^N V(\uu^N_j)$ as a function of the total number of mutation
steps. As displayed in Figure \ref{fig:IP_STD_MEAN}, the estimation of 
the posterior marginal mean and standard deviation produced by 
the SMC and SET methods are equally good, and agree well with the 
MCMC simulations.

%
% === COMPARISON MCMC
%
\begin{figure}[h!t!b!]
\centering
\subfigure{
\includegraphics[width=1\textwidth]{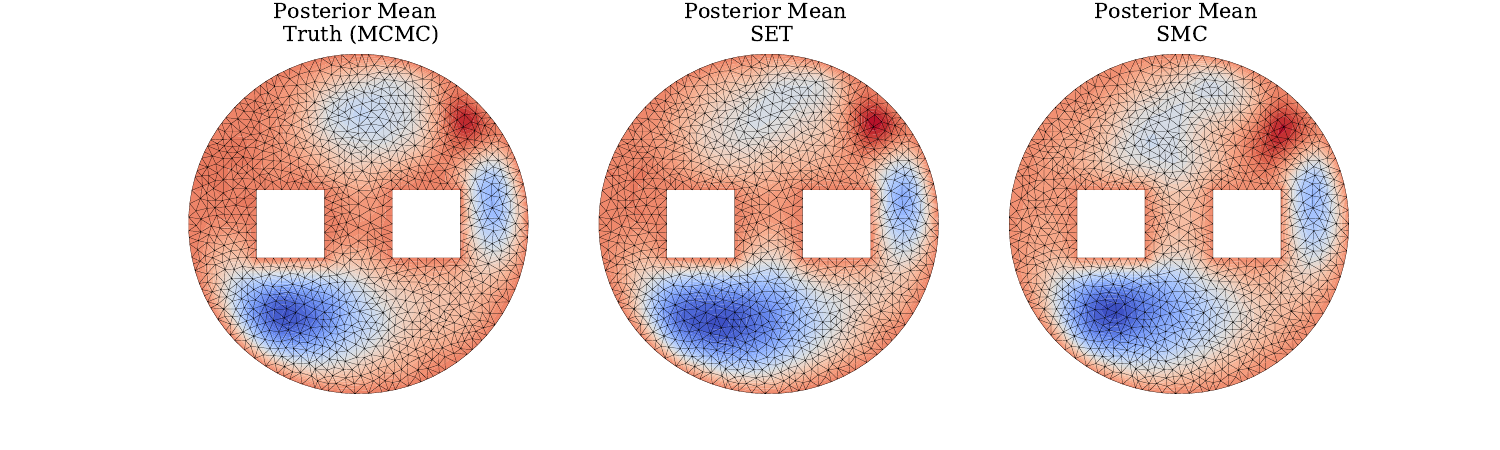}
}
\subfigure{
\includegraphics[width=1\textwidth]{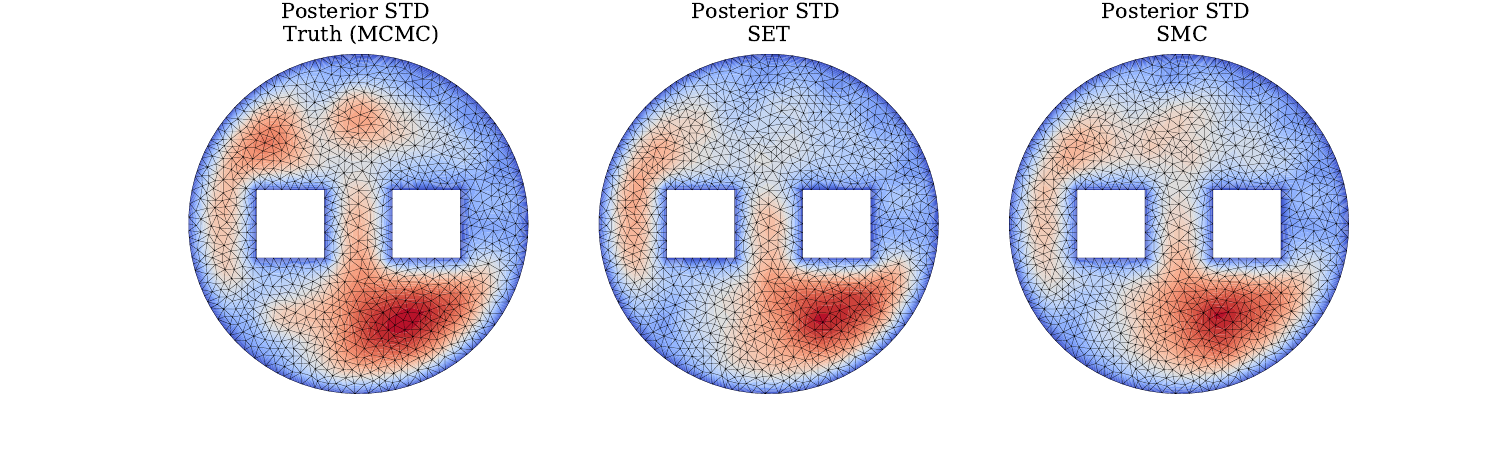}
}
\caption{Bayesian Inverse Problem \eqref{eq.elliptic.pde}. 
{\bf (First row:)} posterior mean obtained from MCMC {\bf (left)} and SET {\bf (center)} and SMC {\bf (right)}. 
{\bf (Second row:)} posterior marginal standard deviations obtained from MCMC {\bf (left)} and SET {\bf (center)} and SMC {\bf (right)}. 
The SET and SMC methods were used with $N=2.10^3$ particles, adaptive PCNL mutation kernels, an adaptive temperature scheme, and an adaptive number of mutation steps at each temperature.
}
\label{fig:IP_STD_MEAN}
\end{figure}

We conclude this section with a brief discussion of effective sample
size computations. Although there have been a few recent and important
methodological advances in this area
\cite{chan2013general,lee2018variance,olsson2019numerically,du2019variance},
it is fair to say that it is still difficult reliably to evaluate the effective
sample size %associated to
for 
%a particle approximation %obtained through
interacting particles methods
such as SMC or the SET. It is worth emphasizing that the effective
sample size functional defined in Equation \eqref{ess} is only used
for adapting the temperature ladder: it is not designed, nor should be
used, to provide a reliable estimate of the variability of the
quantities derived from a particle system.

\section{Conclusions}
\label{sec:conclusions}
\seclab{conclusions}
We have introduced the SET method, an optimal-transport based approach for performing inference in high-dimensional Bayesian inverse problems. The SET methodology is, under mild assumptions, provably consistent in the large-particle regime. Our numerical simulations indicate that, in complex high-dimensional scenarios such as PDE-constrained Bayesian inverse problems where it is typically difficult to design efficient Markov mutation kernels, the SET method performs favourably when compared to other particle-based approaches such as modern adaptive SMC methodologies. 
Our numerical results indicate that the SET method, by relying on transportation methods instead of a resampling scheme, is less affected than SMC by particles degeneracy and is able to better exploit the particles system to adapt the mutation kernels. Although our theoretical results provide consistency guarantees, they do not quantify nor explain the empirical gains observed when comparing the SET to standard SMC approaches.

\bigskip
{\bf Acknowledgement:}
 AM and TBT are partially supported by the Department of
 Energy (grant DE-SC0018147), the National Science Foundation (grants
 NSF-DMS1620352, Early Career NSF-OAC1808576, and NSF-OAC1750863), the
 Defense Threat Reduction Agency  (grant HDTRA1-18-1-0020), and a 2018 ConTex award. AM thanks Nick Alger for many useful discussions on Bayesian inverse problems. AHT acknowledges support from a National University of Singapore (NUS) Young Investigator Award Grant (R-155-000-180-133) and a Singapore Ministry of Education Academic Research Funds Tier 2 (MOE2016-T2-2-135).

\bibliographystyle{alpha}
\bibliography{ref3}

\end{document}